\documentclass[letterpaper]{article} %
\usepackage{aaai2026}  %
\usepackage{times}  %
\usepackage{helvet}  %
\usepackage{courier}  %
\usepackage[hyphens]{url}  %
\usepackage{graphicx} %
\urlstyle{rm} %
\usepackage{natbib}  %
\usepackage{caption} %
\frenchspacing  %
\setlength{\pdfpagewidth}{8.5in} %
\setlength{\pdfpageheight}{11in} %
\usepackage{algorithm}
\usepackage{algorithmic}
\usepackage{amssymb}

\usepackage{newfloat}
\usepackage{listings}
\DeclareCaptionStyle{ruled}{labelfont=normalfont,labelsep=colon,strut=off} %
\lstset{%
	basicstyle={\footnotesize\ttfamily},%
	numbers=left,numberstyle=\footnotesize,xleftmargin=2em,%
	aboveskip=0pt,belowskip=0pt,%
	showstringspaces=false,tabsize=2,breaklines=true}
\floatstyle{ruled}
\newfloat{listing}{tb}{lst}{}
\floatname{listing}{Listing}
\pdfinfo{
/TemplateVersion (2026.1)
}

\usepackage{amsmath}
\usepackage{graphicx}
\usepackage{multirow}
\usepackage{amsthm}
\usepackage{makecell}
\usepackage{enumitem}
\usepackage{algorithm}      %
\usepackage{algorithmic}    %
\usepackage{booktabs} 

\newtheorem{assumption}{Assumption}
\newtheorem{proposition}{Proposition}
\newtheorem{definition}{Definition}

\setcounter{secnumdepth}{2} %

\title{Towards Effective, Stealthy, and Persistent Backdoor Attacks \\Targeting Graph Foundation Models}
\author {
    Jiai Luo\textsuperscript{\rm 1},
    Qingyun Sun\textsuperscript{\rm 1},
    Lingjuan Lyu\textsuperscript{\rm 2},
    Ziwei Zhang\textsuperscript{\rm 1},
    Haonan Yuan\textsuperscript{\rm 1},\\
    Xingcheng Fu\textsuperscript{\rm 3},
    Jianxin Li\textsuperscript{\rm 1}\thanks{Corresponding Author}
}
\affiliations {
    \textsuperscript{\rm 1}SKLCCSE, School of Computer Science and Engineering, Beihang University, Beijing, China\\
    \textsuperscript{\rm 2}Sony AI, Zurich, Switzerland\\
    \textsuperscript{\rm 3}Key Lab of Education Blockchain and Intelligent Technology, Ministry of Education, Guangxi Normal University, China\\
    \{luojy, sunqy, zwzhang, yuanhn, lijx\}@buaa.edu.cn, 
    Lingjuan.Lv@sony.com, 
    fuxc@gxnu.edu.cn
}

\usepackage{bibentry}

\newcommand{\Model}{\textsc{Gfm-Ba}}

\begin{document}

\maketitle

\begin{abstract}
Graph Foundation Models (GFMs) are pre-trained on diverse source domains and adapted to unseen targets, enabling broad generalization for graph learning. Despite that GFMs have attracted considerable attention recently, their vulnerability to backdoor attacks remains largely underexplored. A compromised GFM can introduce backdoor behaviors into downstream applications, posing serious security risks. 
However, launching backdoor attacks against GFMs is non-trivial due to three key challenges.
(1) \textit{Effectiveness}: Attackers lack knowledge of the downstream task during pre-training, complicating the assurance that triggers reliably induce misclassifications into desired classes.
(2) \textit{Stealthiness}: The variability in node features across domains complicates trigger insertion that remains stealthy.
(3) \textit{Persistence}: Downstream fine-tuning may erase backdoor behaviors by updating model parameters. %
To address these challenges, we propose \Model\footnote{Code: \url{https://github.com/RingBDStack/GFM-BA}.}, a novel \underline{B}ackdoor \underline{A}ttack model against \underline{G}raph \underline{F}oundation \underline{M}odels.
Specifically, we first design a label-free trigger association module that links the trigger to a set of prototype embeddings, eliminating the need for knowledge about downstream tasks to perform backdoor injection. 
Then, we introduce a node-adaptive trigger generator, dynamically producing node-specific triggers, reducing the risk of trigger detection while reliably activating the backdoor. 
Lastly, we develop a persistent backdoor anchoring module that firmly anchors the backdoor to fine-tuning-insensitive parameters, enhancing the persistence of the backdoor under downstream adaptation.
Extensive experiments demonstrate the effectiveness, stealthiness, and persistence of \Model.%
\end{abstract}

\section{Introduction}
\label{sec:intro}
Graph Foundation Models (GFMs) are designed to be pre-trained on various graph data from diverse domains, and subsequently adapted to a wide range of downstream tasks in the target domain~\cite{liu2023towards,mao2024graph,shi2024graph,shi2024lecture}. 
Existing efforts~\cite{zhao2024all,yu2024text,yu2025samgpt,wang2024gft} towards GFMs have demonstrated strong knowledge transfer from pre-training source domains to target domains, achieving superior performance. 
While the pre-training and adaptation paradigm~\cite{zi2024prog,tang2024graphgpt,he2024unigraph,lachi2024graphfm} has driven the success of GFMs, they also introduce new potential security vulnerabilities, particularly backdoor attacks (i.e., inserting backdoors into the model that cause it to misbehave when encountering certain triggers). For GFMs, attackers can exploit the pre-training stage to inject backdoors and release compromised pre-trained GNNs to the public. Downstream users who adopt these pre-trained models unknowingly inherit the backdoor, exposing their downstream applications to targeted manipulation. These threats pose risks to critical applications of GFMs such as drug discovery~\cite{bongini2021molecular} and financial fraud detection~\cite{cheng2020graph}. 

\begin{figure*}[t]
  \centering
  \includegraphics[width=\textwidth]{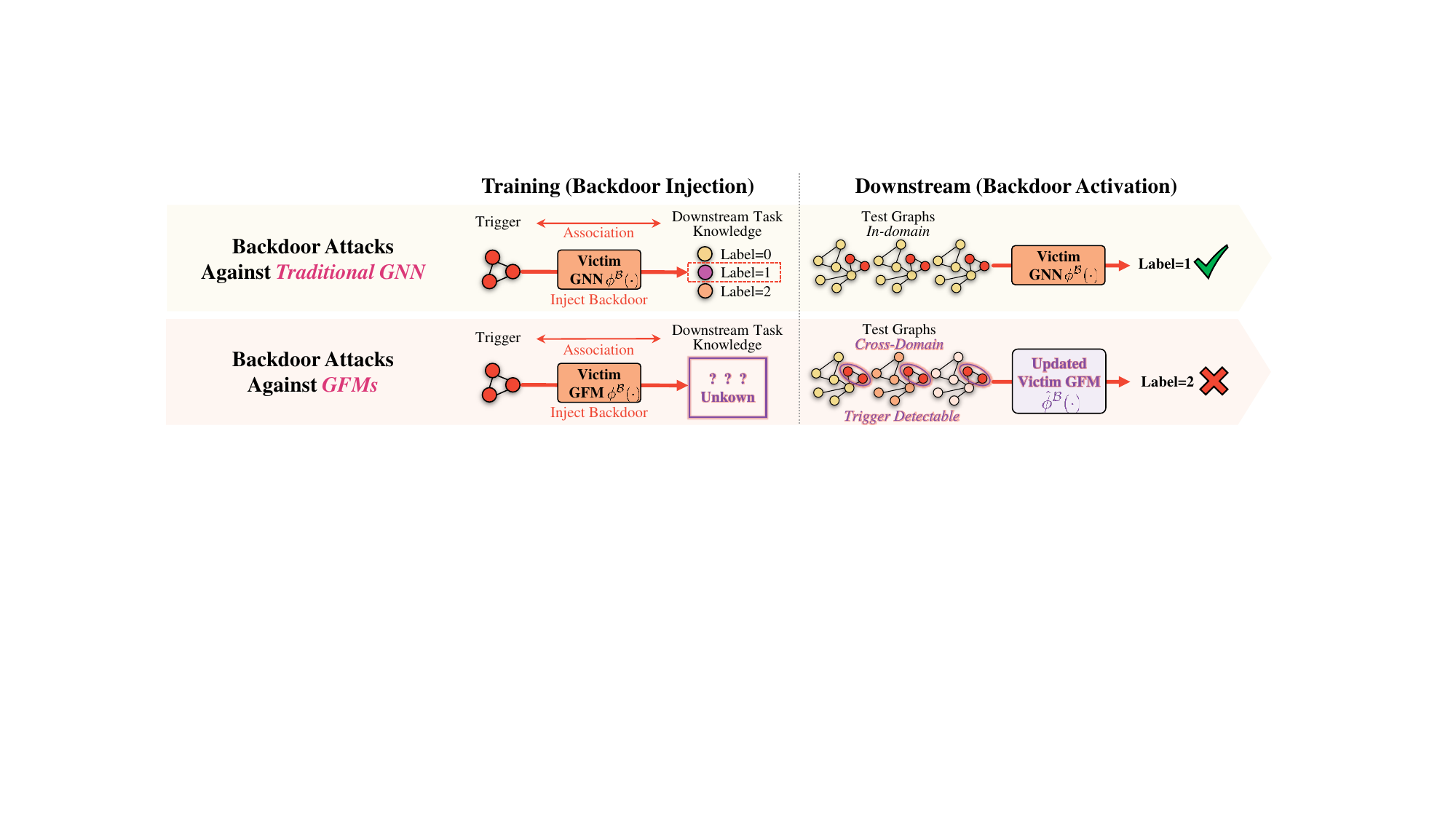}
  \vspace{-1.5em}
  \caption{Key differences between backdoor attacks against traditional GNNs and GFMs.}
  \vspace{-1em}
  \label{fig:intro}
\end{figure*}

Backdoor attacks for traditional GNNs have been extensively studied~\cite{zhang2021backdoor,xi2021graph,dai2023unnoticeable,zheng2023motif,xu2021explainability}. However, backdoor attacks against traditional GNNs and GFMs have fundamental differences. As shown in Figure~\ref{fig:intro}, existing backdoor attacks against GNNs operate under three presumed conditions where (1) labels for downstream tasks are accessible during the backdoor injection phase; (2) the training and downstream graphs originate from the same domain; (3) the backdoor model remains unchanged during the downstream usage. In the context of GFMs, all three conditions may not hold, \textit{leading to three key challenges for designing backdoor attacks against GFMs}: 
(1) \textit{Effectiveness}: During the pre-training stage, downstream task knowledge is inaccessible. How to ensure that the injected trigger consistently induces a specific label that aligns with the attacker's intent? 
(2) \textit{Stealthiness}: The distribution and semantics of node features can vary significantly across different graph domains~\cite{mao2024graph,shi2024graph}. 
How to design triggers that remain stealthy across diverse downstream domains? 
(3) \textit{Persistence}: Downstream adaptation may modify the learned model parameters, thereby erasing the backdoor effect (a phenomenon known as backdoor forgetting~\cite{gu2023gradient}). How can attackers embed backdoors that can remain persistently effective after downstream fine-tuning?

To address the aforementioned challenges, we propose \Model, a novel model for performing \underline{B}ackdoor \underline{A}ttacks against \underline{G}raph \underline{F}oundation \underline{M}odels. 
First, to solve the \textit{effectiveness} challenge and achieve label-specific manipulation without access to downstream knowledge, we design a label-free trigger association module, which links triggers to a set of prototype embeddings during pre-training. 
During the downstream trigger injection phase, the attacker can identify which prototype embedding aligns with the desired target label and injects the corresponding trigger through a few trial queries.
Second, to solve the \textit{Stealthiness} challenge and ensure that the injected trigger %
remains stealthy across diverse downstream domains, we develop a node-adaptive trigger generator, which dynamically produces context-aware triggers conditioned on the context of each target node instead of using fixed triggers. It enhances stealthiness while ensuring reliable backdoor activation. Besides, \Model\ does not modify the clean pre-trained model but instead activates the latent backdoor already hidden in the encoder. %
Lastly, to solve the \textit{persistence} challenge and maintain the functionality of the backdoor after downstream fine-tuning, we introduce a persistent backdoor anchoring module, which anchors the backdoor to parameters in the pre-trained model that are unlikely to change significantly during downstream adaptation. By embedding the backdoor into these stable regions of the model, the trigger-target association becomes less susceptible to being forgotten.

Our main contributions can be summarized as follows:
\begin{itemize}[leftmargin=*]
   \item We study backdoor attacks against Graph Foundation Models, highlighting the significant trustworthiness concerns in the development of GFMs. 
    \item We propose \Model, a novel backdoor attack model for GFMs containing three tailored modules targeting the \textit{effectiveness}, \textit{stealthiness}, and \textit{persistence} challenges.
    \item We conduct extensive experiments and show that \Model\ consistently outperforms existing methods against three representative victim GFMs, demonstrating its superior performance.
    
\end{itemize}

\section{Related Work}

\textbf{Graph Foundation Models}.
Graph Foundation Models (GFMs) aim to capture generalizable graph knowledge that enables positive transfer across tasks and domains~\cite{liu2023towards,mao2024graph,shi2024graph,wang2024gft,zi2024prog,tang2024graphgpt,he2024unigraph,lachi2024graphfm,xia2024opengraph}. They are typically pre-trained using self-supervised objectives, such as link prediction~\cite{yu2024text,zhang2018link} or graph contrastive learning~\cite{zhao2024all,yu2025samgpt}, over multiple source datasets. The resulting model is then adapted to downstream tasks on target graphs through task-specific fine-tuning~\cite{zhao2024all,hassani2022cross,you2020graph} or prompting~\cite{sun2022adversarial,sun2023all,sun2022gppt,fang2023universal,tang2024higpt}. For example,
GCOPE~\cite{zhao2024all} attempts to mitigate the negative transfer by introducing domain-specific virtual nodes that interconnect nodes across domains, aligning the semantic patterns. 
MDGPT~\cite{yu2024text} introduces a two-stage prompting strategy to adapt target domains by integrating unified multi-domain knowledge with domain-specific information. 
SAMGPT~\cite{yu2025samgpt} further introduces structure tokens to align various structural distributions.
However, the trustworthiness of GFMs remains largely unexplored in the current literature.

\textbf{Graph Backdoor Attacks}.
Graph backdoor attacks~\cite{lyu2024cross,zhang2023graph,zhang2021backdoor,xi2021graph,dai2023unnoticeable} aim to manipulate backdoored GNNs to predict a specified label for any input embedded with triggers (typically small subgraphs~\cite{zheng2023motif,xu2021explainability,wang2024multi,yang2024distributed,xu2022poster,yang2024graph,feng2024backdoor,luo2025robust}). 
Early efforts introduced subgraph-based triggers and achieved strong performance~\cite{zhang2021backdoor,xi2021graph}. Subsequent methods enhanced stealthiness by leveraging in-distribution triggers to evade outlier detection~\cite{dai2023unnoticeable,zhang2024rethinking}.
However, their focus on supervised GNNs makes them unsuitable for backdoor attacks on GFMs during pre-training.
GCBA~\cite{zhang2023graph} is the first backdoor attack targeting graph contrastive learning, but it requires knowledge of downstream class labels, limiting its applicability to GFMs. CrossBA~\cite{lyu2024cross} introduces the first cross-domain pre-training graph backdoor attack by aligning triggered graphs with a learned trigger embedding while separating them from clean samples. Yet it cannot control the predicted label, making the backdoor attack degenerate to a form of adversarial evasion attack~\cite{sun2022adversarial,kwon2025dual}. Additionally, the use of a fixed trigger across domains increases its detectability.
Overall, prior methods either rely on downstream knowledge or lack control over the attack outcome, limiting their effectiveness and stealthiness in the GFM setting.

\section{Threat Model}
\textbf{Attacker's Goals}: Consistent with prior work~\cite{lyu2024cross,zhang2023graph,xi2021graph}, we focus on the node classification task. In the context of backdoor attacks against GFMs, attackers aim to inject a backdoor into the pre-trained encoder $\phi(\cdot)$, resulting in a compromised model $\phi^{\mathcal{B}}(\cdot)$. The goal is to induce any downstream classifiers $f(\cdot)$ built on adapted $\hat{\phi}^{\mathcal{B}}(\cdot)$ to misclassify triggered graphs into a specific label $y^*$, while maintaining normal performance on clean inputs. Formally, the attack objective is:
\begin{equation}
f(\hat{\phi}^{\mathcal{B}}(\mathcal{G}^{\prime}, v)) = y^{*}, \quad
f(\hat{\phi}^{\mathcal{B}}(\mathcal{G}, v)) = f(\hat{\phi}(\mathcal{G}, v)),
\end{equation}
where $\mathcal{G}$ is a clean graph, $\mathcal{G}^{\prime}$ is the corresponding triggered graph, and $v$ denotes the target node.

\textbf{Attacker's Knowledge and Capabilities}: Following CrossBA~\cite{lyu2024cross}, attackers have full control over the pre-training process but have no access to any downstream information. 
\textit{\textbf{This setting is realistic, as foundation model training is resource-intensive, and downstream users often rely on open-source models or APIs to build their applications.}} \textit{Thus, a malicious adversary can release an open-source GFM, and downstream users who adopt it may unknowingly inherit the embedded backdoor.}
During the backdoor activation stage, the attacker can inject triggers into target nodes and observe feedback signals. 
For example, in paper ranking systems based on citation networks, attackers can publish dummy papers that cite the target paper. Public leaderboards then allow attackers to observe the feedback on the impact. 
In e-commerce, fake user accounts can manipulate and monitor recommendation outcomes.

\section{\Model}
In this section, we elaborate \Model, a novel model for \underline{B}ackdoor \underline{A}ttacks against \underline{G}raph \underline{F}oundation \underline{M}odels, designed to achieve effectiveness, stealthiness, and persistence simultaneously.
The overall architecture of \Model\ is shown in Figure~\ref{fig:frame}, which consists of three components:

(1) \textit{Label-Free Trigger Association Module (Challenge 1: Effectiveness)}. We apply Farthest Point Sampling to select prototype embeddings from the node embedding space of pre-training graphs generated by the pre-trained GNN encoder, which serve as targets for trigger association.

(2) \textit{Node-Adaptive Trigger Generator (Challenge 2: Stealthiness)}. We design a trigger generator to dynamically produce the personalized trigger for each target node. The generated trigger is optimized to align the triggered target node with the target embedding while maintaining graph homophily, ensuring both effectiveness and stealthiness.

(3) \textit{Persistent Backdoor Anchoring Module (Challenge 3: Persistence)}. We enhance backdoor persistence under downstream adaptation by anchoring the trigger-target association to fine-tuning-insensitive parameters of the pre-trained model. Specifically, we apply graph mixup to simulate distribution shifts and identify fine-tuning-sensitive parameters. Random perturbations are then applied to anchor the trigger-target mapping to the identified stable regions of the model.

\begin{figure*}[t]
  \centering
  \includegraphics[width=\textwidth]{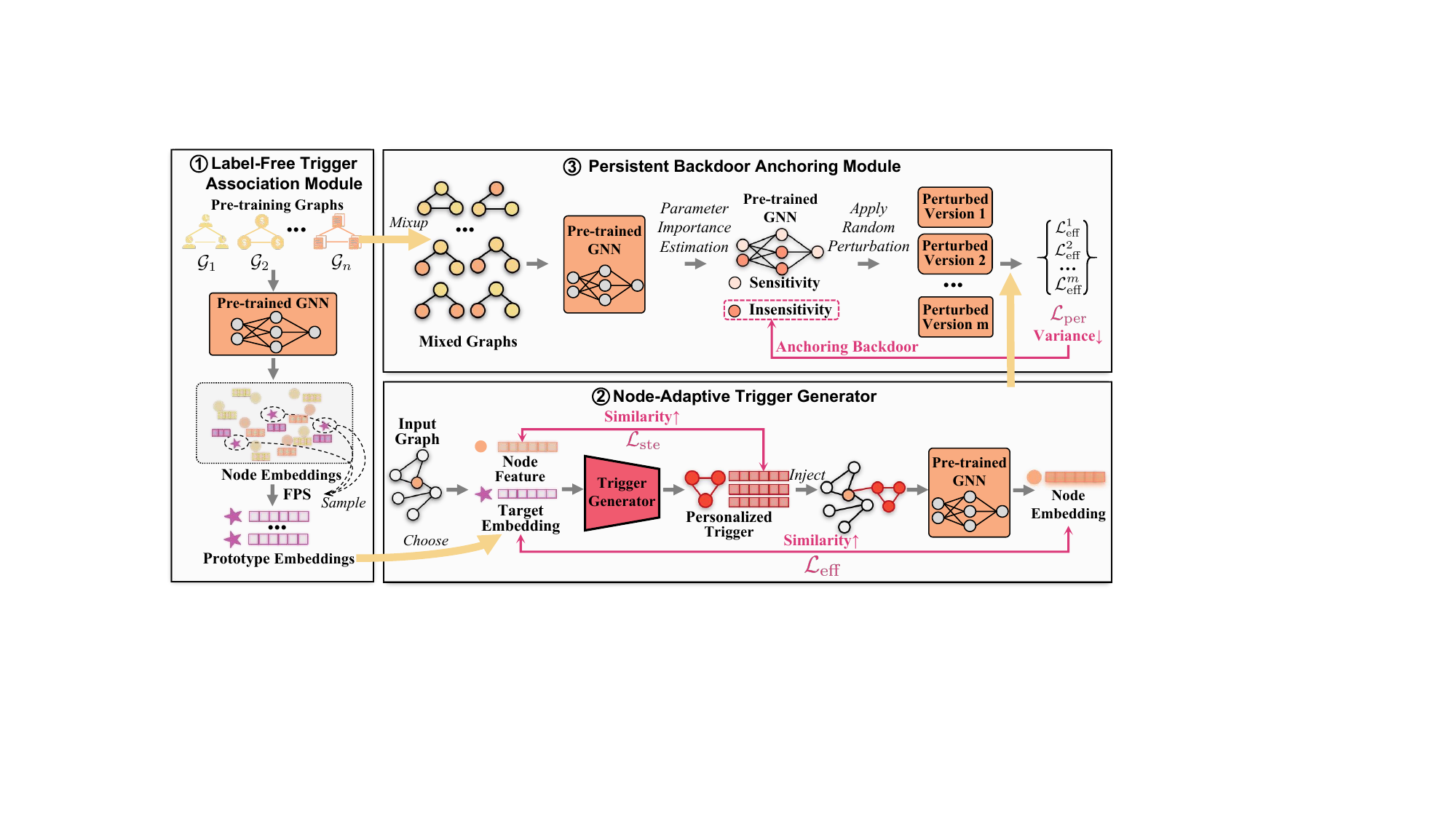}
  \vspace{-0.5em}
  \caption{The overall framework of \Model. FPS is first applied to select prototype embeddings as trigger association targets. The node-adaptive trigger generator then produces personalized triggers conditioned on the target embedding and the target node, ensuring both stealthiness and effectiveness. Finally, graph mixup is used to identify fine-tuning-insensitive parameters, allowing the backdoor to be anchored in the stable regions of the model and remain effective after the downstream adaptation.}
  \label{fig:frame}
  \vspace{-0.5em}
\end{figure*}

\subsection{Label-Free Trigger Association Module}
During pre-training, attackers lack access to the downstream model architecture and its decision boundaries. As a result, existing graph backdoor attacks that require the downstream task labels are infeasible in the context of GFMs~\cite{zhang2023graph, zhang2021backdoor, dai2023unnoticeable}. 
While methods such as CrossBA~\cite{lyu2024cross} avoid this limitation by pushing the triggered graph away from clean graphs and toward a learned trigger, they lack control over the resulting label, causing the triggered graph to be classified into a fixed but unknown class, failing to guide it to a specific target label.

To overcome this challenge, we propose associating triggers with a set of prototype embeddings rather than with specific labels. 
However, naively generating these prototype embeddings at random may result in all of them corresponding to a fixed and undesired class, whereas our goal is to construct prototype embeddings that span diverse downstream classes, enabling the attacker to steer the triggered graph toward a desired label by selecting an appropriate target embedding through several trial queries during the trigger injection phase. 
Following the assumption in prior work~\cite{zhao2024all, yu2024text, yu2025samgpt} that pre-training data contains knowledge relevant to downstream applications, sampling prototype embeddings from the node embedding space of pre-training graphs can effectively reduce the search space. Specifically, we employ the Farthest Point Sampling (FPS)~\cite{eldar1997farthest}, which is a greedy algorithm that iteratively selects the most distant point to ensure that the sampled set preserves the overall structure and distribution of the original data~\cite{eldar1997farthest,lang2020samplenet}. 
Here, we propose Proposition~\ref{theorem:fps}.

\begin{proposition}
When point density decays monotonically from each class centroid, increasing the separation between centroids raises the probability that FPS will cover more classes in a fixed number of steps.
\label{theorem:fps}
\end{proposition}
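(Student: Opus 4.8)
The plan is to make the density hypothesis quantitative, reduce the statement to a deterministic geometric fact about FPS in the well-separated regime, and then upgrade it to a probabilistic monotonicity claim through a tail-mass parametrization. Concretely, suppose there are $K$ classes with centroids $\mu_1,\dots,\mu_K$, each class-$k$ point is drawn as $\mu_k+\xi$ with $\xi$ having a density $g_k(\|\xi\|)$ that is non-increasing in $\|\xi\|$, and let $\delta=\min_{j\neq k}\|\mu_j-\mu_k\|$ be the separation. For $\epsilon\in(0,1)$ define the effective radius $R_k(\epsilon)$ as the smallest $r$ with $\Pr(\|\xi\|\le r)\ge 1-\epsilon$; monotone decay guarantees $R_k(\epsilon)$ is well defined and non-increasing in $\epsilon$, and we set $R(\epsilon)=\max_k R_k(\epsilon)$. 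A class is \emph{covered} by a $T$-step FPS run if one of the selected points belongs to it, and $C(\delta)$ denotes the number of covered classes.

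First I would prove the deterministic core lemma: on the event $\mathcal{E}(\epsilon)$ that every sampled point lies within $R(\epsilon)$ of its own centroid, if $\delta>4R(\epsilon)$ then $T$-step FPS, from any start and any tie-breaking rule, covers $\min(T,K')$ distinct classes, where $K'$ is the number of non-empty classes. The argument is induction on the step $t$: if the current selected set $S_{t-1}$ already meets $t-1$ distinct classes and some class is still uncovered, then every point $x$ in a covered class satisfies $d(x,S_{t-1})\le 2R(\epsilon)$ (triangle inequality through the same-class representative in $S_{t-1}$), while every point $y$ in an uncovered class satisfies $d(y,S_{t-1})\ge \delta-2R(\epsilon)>2R(\epsilon)$; hence the farthest point, which FPS selects next, lies in an uncovered class. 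A union bound gives $\Pr(\mathcal{E}(\epsilon))\ge (1-\epsilon)^N\ge 1-N\epsilon$ over $N$ sampled points.

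Then I would extract the monotone dependence on $\delta$. For fixed $\delta$ the admissible tolerances $\{\epsilon:\ 4R(\epsilon)<\delta\}$ form an interval $(\epsilon_{\min}(\delta),1)$, and since $R(\cdot)$ is non-increasing, $\epsilon_{\min}(\delta)$ is non-increasing in $\delta$ with $\epsilon_{\min}(\delta)\to 0$ as $\delta\to\infty$ (using $R(\epsilon)\to 0$ as $\epsilon\to 1$, from positivity of the densities near the centroid). Combining this with the lemma,
\[
\Pr\bigl(C(\delta)=\min(T,K')\bigr)\ \ge\ \sup_{\epsilon>\epsilon_{\min}(\delta)}(1-\epsilon)^N\ =\ \bigl(1-\epsilon_{\min}(\delta)\bigr)^N\ =:\ h(\delta),
\]
where $h$ is non-decreasing in $\delta$ and $h(\delta)\to 1$. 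Since ``covering at least $m$ classes'' contains the full-coverage event for every $m\le\min(T,K')$, its probability is also bounded below by $h(\delta)$ and tends to $1$; this is precisely the effect of increasing centroid separation asserted by the proposition. If a pointwise stochastic-dominance form is desired, I would instead couple configurations across $\delta$ by writing $\mu_k(\delta)=\delta u_k$ for fixed unit vectors $u_k$ and reusing the same offsets $\xi_i$, tie-breaks, and start, so that along each sample path $C(\delta)$ is eventually constant at $\min(T,K')$; Fatou's lemma then gives $\liminf_{\delta}\mathbb{E}[C(\delta)]\ge\mathbb{E}[\min(T,K')]$, the maximal value.

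The main obstacle is the regime of moderate separation, where clusters overlap and the FPS trajectory can reorganize non-monotonically as $\delta$ grows, so genuine pointwise monotonicity of $\Pr(C(\delta)\ge m)$ in $\delta$ need not hold. The proposal sidesteps this by establishing monotonicity of the explicit lower bound $h(\delta)$ (equivalently, by invoking the coupling only in the well-separated regime, where the argmax in each FPS step is pinned down), which is exactly the form needed to justify sampling prototype embeddings with well-spread class centroids.
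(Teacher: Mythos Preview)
Your proposal is correct and takes essentially the same approach as the paper: both condition on the high-probability event that every sample lies within its class-quantile radius, use the triangle inequality to bound covered-class distances by $2R$ and uncovered-class distances from below by $\delta-2R$, and conclude that FPS must select a new class at each step whenever $\delta>4R$. The only difference is packaging—the paper couples datasets across a separation parameter $\lambda$ and sends the quantile level $\alpha\uparrow 1$ to assert pointwise monotonicity of $\Pr(C_k\ge r)$, whereas you (more cautiously, and arguably more rigorously) establish monotonicity of the explicit lower bound $h(\delta)=(1-\epsilon_{\min}(\delta))^N$ and correctly flag that genuine pointwise monotonicity can fail in the moderate-separation regime.
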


The detailed proof can be found in Appendix~\ref{appen:proof}. Based on Proposition~\ref{theorem:fps} and prior findings~\cite{chen2020simple, morcos2022understanding} that self-supervised pre-training enhances class separability, adopting FPS to select prototype embeddings from the node embeddings of the pre-training graphs encourages sampled embeddings to cover a broader range of downstream classes. 
\textit{Note that while ``density decays monotonically from class centroid'' is idealized and not strictly satisfied in practice, many real-world datasets exhibit this overall trend~\cite{idrissi2015improvement}, making proposition~\ref{theorem:fps} a useful theoretical motivation for using FPS.}
Specifically, let $\mathcal{E}$ denote the set of node embeddings extracted by the pre-trained GNN from the pre-training graphs. The target embedding set $\mathcal{E}_{\text{target}}$ is then constructed by selecting $k$ prototype embeddings from $\mathcal{E}$ using FPS, denoted as $\mathcal{E}_{\text{target}} = \text{FPS}(\mathcal{E}, k)$.

After associating triggers with $\mathcal{E}_{\text{target}}$, in the downstream trigger injection stage, attackers can select a target embedding from $\mathcal{E}_{\text{target}}$ based on the downstream application and generate its corresponding trigger using the trigger generator (see Section~\ref{sec:generator}) to activate the desired backdoor behavior.

\subsection{Node-Adaptive Trigger Generator} 
\label{sec:generator}
Graphs from different domains exhibit significant feature distribution discrepancy~\cite{shi2024graph}. Existing backdoor methods that use fixed, domain-agnostic triggers can introduce feature inconsistencies between the injected trigger node and its neighbors in the downstream graph, violating the graph homophily~\cite{jin2020graph}, thus making the injected trigger more detectable and easier to remove.

To avoid this problem, we replace the universal trigger with a trigger generator that dynamically produces a personalized trigger conditioned on both the target node and the selected target embedding. 
Specifically, we implement MLP as the trigger generator. Consistent with prior work~\cite{lyu2024cross}, we design the trigger as a fully connected 3-node graph with identical features, which is much smaller than both pre-training and downstream graphs. Let $\mathbf{x}_i$ denote the feature of the target node and $\mathbf{e}_{j}$ the target embedding chosen from $\mathcal{E}_{\text{target}}$. The feature of the injected trigger node, denoted as $\mathbf{x}_{ij}^{\text{tri}}$, is generated as follows:
\begin{equation} 
\mathbf{x}_{ij}^{\text{tri}} = \text{MLP}\left([\mathbf{x}_i \| \mathbf{e}_j]\right),
\end{equation}
where $[\cdot \, \| \, \cdot]$ denotes vector concatenation. We have two objectives for the trigger: (1) ensuring that the triggered graph is associated with the target embedding to guarantee effectiveness, and (2) maintaining feature similarity with the connected target node to enhance stealthiness. 
Let $\mathcal{G}_i$ denote a clean graph from the pre-training dataset $\{\mathcal{G}_{u}\}_{u=1}^{n}$, and l$\tilde{\mathcal{G}}_{ij}$ denote the 3-node trigger graph feature $\mathbf{x}_{ij}^{\text{tri}}$. We optimize the trigger generator with $\mathcal{L}_{\text{eff}}$ and $\mathcal{L}_{\text{ste}}$:
\begin{align}
  \mathcal{L}_{\text{eff}}\!
    &=\! -\,\mathbb{E}_{u\sim[n],\,\!j\sim[k],\,\!i\sim[|\mathcal G_u|]}\,\!
         \mathrm{sim}\left(\!\phi(\operatorname{In}(\mathcal G_u,\!\tilde{\mathcal G}_{ij},\!i)\!),\!e_j\!\right)\!\label{eq:loss_e},
  \\ 
  \mathcal{L}_{\text{ste}}\!
    &=\! -\,\mathbb{E}_{u\sim[n],\,\!j\sim[k],\,\!i\sim[|\mathcal G_u|]}\,\!
         \mathrm{sim}\left(x_i,x_{ij}^{\rm tri}\right)\label{eq:loss_s}.
\end{align}
where $|\mathcal{G}_{u}|$ denotes the number of nodes in $\mathcal{G}_{u}$, $\operatorname{In}(\mathcal{G}_u, \tilde{\mathcal{G}}_{ij}, i)$ denotes inserting the trigger graph $\tilde{\mathcal{G}}_{ij}$ into the clean graph $\mathcal{G}_u$ with the $i$-th node as the target node, $\phi(\cdot)$ is the frozen pre-trained GNN encoder, and $\operatorname{sim}(\cdot, \cdot)$ denotes the cosine similarity. Here, $\mathcal{L}_{\text{eff}}$ promotes the association between the embedding of the triggered graph and the target embedding, while $\mathcal{L}_{\text{ste}}$ encourages the feature similarity between the trigger node and the connected target node. 
Notably, the backdoor vulnerability is introduced without the need to modify the parameters of the pre-trained GNN, but instead leverages the inherent latent backdoor logic already present in the encoder, enabling the GFMs to preserve their performance on clean graphs without any degradation.

\begin{figure}[t]
  \centering
  \includegraphics[width=0.8\linewidth]{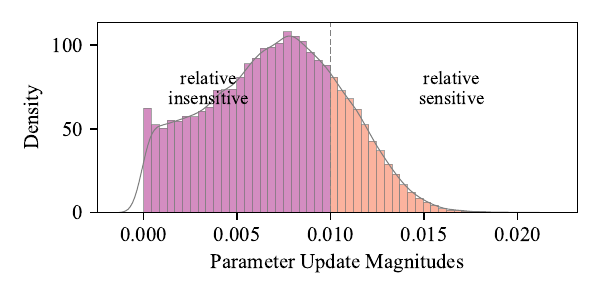}
  \vspace{-0.5em}
  \caption{An analysis of the distribution of parameter update magnitudes after downstream fine-tuning.}
  \vspace{-0.5em}
  \label{fig:pc}
\end{figure}
\subsection{Persistent Backdoor Anchoring Module}
During the downstream model construction, downstream users may fine-tune the pre-trained GNN, and such parameter updates can invalidate the backdoor, known as backdoor forgetting~\cite{gu2023gradient}.

To mitigate this negative effect, we draw inspiration from the backdoor attacks in NLP~\cite{cheng2023backdoor} where triggers are crafted using rare tokens to minimize the likelihood of related parameters being updated during downstream fine-tuning. Our key idea is to anchor the association between the trigger and the target embedding to parameters of the pre-trained GNN that are insensitive to fine-tuning in downstream applications. 
To investigate this possibility, we conducted a preliminary experiment in which we measured the magnitude of parameter updates of a pre-trained GCN~\cite{kipf2016semi} during downstream fine-tuning, with the details can be found in Appendix~\ref{appen:exp}. 
As shown in Figure~\ref{fig:pc}, most parameters are insensitive (i.e., change slightly after fine-tuning), while only a few are significantly updated. This motivates us to anchor the trigger–target association to these insensitive parameters.

\begin{proposition}
    Consider a GNN with parameters $\boldsymbol{\Theta}$. Let $\mathbf{X}$ be the node features, $\mathbf{A}$ be the adjacency matrix, and $\mathbf{Z}$ be the output. Assume the Jacobian matrix $J_{\partial \mathbf{Z}/\partial \theta_k}(\mathbf{X})$ is rank-deficient. For any direction $\Delta \boldsymbol{\Theta} \in \bigcup_{\mathbf{X}} \ker J_{\partial \mathbf{Z}/\partial \theta_k}(\mathbf{X})$, there exists an input $\mathbf{X}$ such that $\mathbf{Z}$ is first-order insensitive to $\Delta \boldsymbol{\Theta}$, where $\ker$ denotes the null space of the Jacobian matrix, and $\bigcup$ denotes the union operator.
\label{theorem:insensitive}
\end{proposition}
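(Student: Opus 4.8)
The plan is to strip the informally stated proposition down to a short linear-algebra argument about directional derivatives. First I would fix notation: write the GNN output as $\mathbf{Z} = \phi(\mathbf{X},\mathbf{A};\boldsymbol{\Theta})$, vectorize both the output and the parameter vector, and for a fixed adjacency $\mathbf{A}$ and fixed base parameters $\boldsymbol{\Theta}$ define the parameter Jacobian at an input $\mathbf{X}$ as $J(\mathbf{X}) := \partial\,\mathrm{vec}(\mathbf{Z})/\partial\boldsymbol{\Theta}$ --- this is the object written $J_{\partial\mathbf{Z}/\partial\theta_k}(\mathbf{X})$ in the statement, regarded as a function of the input $\mathbf{X}$ alone. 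I would then make ``first-order insensitive'' precise: $\mathbf{Z}$ is first-order insensitive to a direction $\Delta\boldsymbol{\Theta}$ at input $\mathbf{X}$ iff the directional derivative $\lim_{t\to0}\tfrac{1}{t}\bigl(\phi(\mathbf{X},\mathbf{A};\boldsymbol{\Theta}+t\,\Delta\boldsymbol{\Theta})-\phi(\mathbf{X},\mathbf{A};\boldsymbol{\Theta})\bigr)$ vanishes. Since a standard GNN is a finite composition of affine maps, (smooth) activations, and the fixed propagation operator determined by $\mathbf{A}$, it is differentiable in $\boldsymbol{\Theta}$, so the chain rule identifies this directional derivative with $J(\mathbf{X})\,\Delta\boldsymbol{\Theta}$; hence first-order insensitivity at $\mathbf{X}$ is \emph{equivalent} to $\Delta\boldsymbol{\Theta}\in\ker J(\mathbf{X})$.

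With this dictionary the proposition is almost immediate. By the rank-deficiency hypothesis, for the relevant inputs $\mathbf{X}$ we have $\mathrm{rank}\,J(\mathbf{X}) < \dim\boldsymbol{\Theta}$, so rank--nullity gives $\dim\ker J(\mathbf{X}) = \dim\boldsymbol{\Theta} - \mathrm{rank}\,J(\mathbf{X}) \ge 1$; in particular $\ker J(\mathbf{X})\neq\{\mathbf{0}\}$, so the union $\bigcup_{\mathbf{X}}\ker J(\mathbf{X})$ is nontrivial and the statement is not vacuous. Now take any $\Delta\boldsymbol{\Theta}\in\bigcup_{\mathbf{X}}\ker J(\mathbf{X})$. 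By the definition of the union there is at least one input $\mathbf{X}^{\star}$ with $\Delta\boldsymbol{\Theta}\in\ker J(\mathbf{X}^{\star})$, i.e.\ $J(\mathbf{X}^{\star})\,\Delta\boldsymbol{\Theta}=\mathbf{0}$. Combined with the equivalence established above, the directional derivative of $\mathbf{Z}$ along $\Delta\boldsymbol{\Theta}$ at $\mathbf{X}^{\star}$ vanishes, so $\mathbf{Z}$ is first-order insensitive to $\Delta\boldsymbol{\Theta}$ at the input $\mathbf{X}^{\star}$, which is exactly the claimed existence.

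The real content here is the modeling, not a hard estimate, so I do not expect a serious obstacle; the two points needing care are (i) justifying that the directional derivative is genuinely $J(\mathbf{X})\,\Delta\boldsymbol{\Theta}$ --- for smooth activations this is the ordinary chain rule, while for piecewise-linear activations one restricts to the full-measure set of parameters where $\phi$ is differentiable, or re-reads the identity in terms of one-sided derivatives --- and (ii) emphasizing that the existential quantifier over $\mathbf{X}$ is essential: a fixed $\Delta\boldsymbol{\Theta}$ lying in the union need not annihilate $J(\mathbf{X})$ for \emph{every} input, only for the particular $\mathbf{X}^{\star}$ exhibited above, which is precisely why the conclusion reads ``there exists an input.'' I would also remark that in an over-parameterized GNN the rank-deficiency hypothesis holds automatically whenever $\dim\boldsymbol{\Theta}$ exceeds the output dimension, so the set of fine-tuning-insensitive directions is generically large --- the very property the persistent backdoor anchoring module exploits.
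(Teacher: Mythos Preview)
Your proposal is correct and follows essentially the same route as the paper: identify first-order insensitivity with membership in $\ker J(\mathbf{X})$, invoke rank--nullity to show the kernel is nontrivial, and then read off the existence of $\mathbf{X}^{\star}$ directly from the definition of the union. If anything, your write-up is more careful than the paper's own proof --- you make the meaning of ``first-order insensitive'' explicit, justify the chain-rule identification, and flag the smoothness caveat for piecewise-linear activations, none of which the paper spells out.
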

The proof is provided in Appendix~\ref{appen:proof}. Proposition~\ref{theorem:insensitive} suggests the possibility that the association between the trigger and the target embedding can be anchored to insensitive parameters of GNNs. 
To achieve this goal, we first apply graph mixup~\cite{ling2023graph} to explore potential downstream graph patterns.
We then adopt the parameter importance estimation method from model pruning~\cite{molchanov2019importance}, leveraging the pre-training loss on mixed graphs to identify insensitive parameters.
Finally, we introduce random perturbations to the identified sensitive parameters during the training process of the trigger generator, guiding the generator to anchor the trigger-target link to the insensitive parameters.
Specifically, for each $(\mathcal{G}_{i}\! = \!\{\mathbf{X}_{i}, \mathbf{A}_{i}\}, \mathcal{G}_{j}\! =\! \{\mathbf{X}_{j}, \mathbf{A}_{j}\})$ in the pre-training dataset with $n$ graphs, we synthesize a mixed graph $\mathcal{G}^{\text{mix}}\!=\!\{\mathbf{X}^{\text{mix}}\!,\! \mathbf{A}^{\text{mix}}\}$ as follows:
\begin{equation}\label{eq:mixed}
\begin{aligned}
  \mathbf{A}^{\mathrm{mix}}
    &= \lambda \mathbf{A}_{i} + (1 - \lambda)\,\mathbf{M}\,\mathbf{A}_{j}\,\mathbf{M}^{\top},\\
  \mathbf{X}^{\mathrm{mix}}
    &= \lambda \mathbf{X}_{i} + (1 - \lambda)\,\mathbf{M}\,\mathbf{X}_{j}.
\end{aligned}
\end{equation}
where $\lambda$ is the hyperparameter, $\mathbf{M}$ is the alignment matrix calculated as $\mathbf{M} = \text{softmax}(\text{sim}(\mathbf{R}_i, \mathbf{R}_j))$, with $\mathbf{R}_i$ and $\mathbf{R}_j$ denoting node representations of $\mathcal{G}_{i}$ and $\mathcal{G}_{j}$ computed via two rounds of message-passing.

We treat the mixed graph set $\{\mathcal{G}^{\text{mix}}_i\}_{i=1}^{n(n-1)}$ as a proxy for potential downstream patterns, based on the widely accepted assumption that pre-training graphs encode knowledge relevant to downstream applications~\cite{mao2024graph,shi2024graph,zhao2024all}. 
To estimate parameters sensitive to fine-tuning, we adopt the importance-based sensitivity measure from model pruning~\cite{molchanov2019importance}. Given the pre-training loss $\mathcal{L}_{\text{pre}}$ evaluated on the mixed graph set, the sensitivity of parameter $\theta_k$ is defined as:
\begin{equation}
\label{eq:sensitive}
    \mathcal{I}(\theta_{k})=\left(g_{k}\theta_{k}-\frac{1}{2}\theta_{k}\mathbf{H}_{k}\boldsymbol{\Theta}\right)^{2},
\end{equation}
where $g_{k}=\partial\mathcal{L}_{\text{pre}}/\partial\theta_{k}$ and $\mathbf{H}_{k}$ represents the $k$-th row of the corresponding Hessian matrix.
A larger value of $\mathcal{I}(\theta_{k})$ implies that the parameter $\theta_{k}$ is more sensitive. We then introduce random perturbations to the top $s$\% most sensitive ones. 
For each selected sensitive parameter $\theta_{k}$, the perturbation is applied as $\theta_{k}\leftarrow\theta_{k}+ \epsilon|\theta_{k}|$, where $\epsilon$ is sampled from a Gaussian distribution $\mathcal{N}(0,\sigma^{2})$.
We perform $m$ perturbations and obtain a set of perturbed parameters $\{\boldsymbol{\Theta}^{(i)}\}_{i=1}^{m}$ of the pre-trained GNN.
These perturbed parameters will yield a set of effective loss $\mathcal{L}_{\text{eff}}$ as defined in Eq.~\eqref{eq:loss_e}, denoted as $\{\mathcal{L}_{\text{eff}}^{j}\}_{j=1}^{m}$.
To mitigate the degradation of backdoor effectiveness caused by sensitive parameters during downstream adaptation, we introduce a persistence loss $\mathcal{L}_{\text{per}}$:
\begin{equation} 
\label{eq:loss_p}
\mathcal{L}_{\text{per}} = \text{Var}(\{\mathcal{L}_{\text{eff}}^{j}\}_{j=1}^{m}) +\text{Mean}(\{\mathcal{L}_{\text{eff}}^{j}\}_{j=1}^{m}).
\end{equation}
Minimizing $\mathcal{L}_{\text{per}}$ enhances the persistence of trigger-target association under downstream fine-tuning.

The overall pipeline of \Model\ is in Appendix~\ref{appen:alg}.

\section{Experiments}
\label{sec:exp}

\begin{table*}[t]
  \centering
  
  \setlength{\tabcolsep}{0.8\tabcolsep}
  \resizebox{\textwidth}{!}{%
    \begin{tabular}{c|c|c c|c c|c c|c c|c c}
      \toprule
      \multicolumn{2}{c|}{\textbf{Dataset}} & \multicolumn{2}{c|}{\textbf{Cora}} & \multicolumn{2}{c|}{\textbf{CiteSeer}} & \multicolumn{2}{c|}{\textbf{PubMed}} & \multicolumn{2}{c|}{\textbf{Photo}} & \multicolumn{2}{c}{\textbf{Computers}} \\
      \midrule
      \textbf{Victim} & \textbf{Threaten} & \makecell{\textbf{ASR}\\(\textit{Scen.1})} & \makecell{\textbf{ASR}\\(\textit{Scen.2})} & \makecell{\textbf{ASR}\\(\textit{Scen.1})} & \makecell{\textbf{ASR}\\(\textit{Scen.2})} & \makecell{\textbf{ASR}\\(\textit{Scen.1})} & \makecell{\textbf{ASR}\\(\textit{Scen.2})} & \makecell{\textbf{ASR}\\(\textit{Scen.1})} & \makecell{\textbf{ASR}\\(\textit{Scen.2})} & \makecell{\textbf{ASR}\\(\textit{Scen.1})} & \makecell{\textbf{ASR}\\(\textit{Scen.2})} \\
      \midrule
      \multirow{4}[2]{*}{GCOPE} & GCBA\_R & 26.78\scalebox{0.5}{±8.89} & 3.83\scalebox{0.5}{±1.27} & 29.64\scalebox{0.5}{±8.67} & 4.94\scalebox{0.5}{±1.45} & 62.10\scalebox{0.5}{±12.81} & 20.70\scalebox{0.5}{±4.27} & 26.80\scalebox{0.5}{±8.79} & 3.35\scalebox{0.5}{±1.10} & 35.40\scalebox{0.5}{±19.13} & 3.54\scalebox{0.5}{±1.91} \\
          & GCBA\_M & 33.42\scalebox{0.5}{±1.89} & 4.77\scalebox{0.5}{±0.27} & 35.87\scalebox{0.5}{±6.50} & 5.98\scalebox{0.5}{±1.08} & 64.94\scalebox{0.5}{±15.18} & 21.65\scalebox{0.5}{±5.06} & 27.80\scalebox{0.5}{±9.26} & 3.48\scalebox{0.5}{±1.16} & 46.20\scalebox{0.5}{±12.34} & 4.62\scalebox{0.5}{±1.23} \\
          & CrossBA & \textbf{100.00\scalebox{0.5}{±0.00}} & \underline{14.29\scalebox{0.5}{±0.00}} & \textbf{100.00\scalebox{0.5}{±0.00}} & \underline{16.67\scalebox{0.5}{±0.00}} & \textbf{100.00\scalebox{0.5}{±0.00}} & \underline{33.33\scalebox{0.5}{±0.00}} & \underline{74.00\scalebox{0.5}{±13.44}} & \underline{9.25\scalebox{0.5}{±1.68}} & \underline{79.80\scalebox{0.5}{±7.26}} & \underline{7.98\scalebox{0.5}{±0.73}} \\
          & \textbf{\Model} & \textbf{100.00\scalebox{0.5}{±0.00}} & \textbf{90.40\scalebox{0.5}{±12.10}} & \textbf{100.00\scalebox{0.5}{±0.00}} & \textbf{89.06\scalebox{0.5}{±8.02}} & \textbf{100.00\scalebox{0.5}{±0.00}} & \textbf{100.00\scalebox{0.5}{±0.00}} & \textbf{93.20\scalebox{0.5}{±7.05}} & \textbf{84.53\scalebox{0.5}{±1.79}} & \textbf{94.40\scalebox{0.5}{±7.70}} & \textbf{78.54\scalebox{0.5}{±7.28}} \\
      \midrule
      \multirow{4}[2]{*}{SAMGPT} & GCBA\_R & 61.66\scalebox{0.5}{±8.82} & 8.81\scalebox{0.5}{±1.26} & 51.60\scalebox{0.5}{±13.57} & 8.60\scalebox{0.5}{±2.26} & 89.02\scalebox{0.5}{±18.39} & 29.67\scalebox{0.5}{±6.13} & 52.00\scalebox{0.5}{±9.95} & 6.50\scalebox{0.5}{±1.24} & 66.60\scalebox{0.5}{±6.66} & 6.66\scalebox{0.5}{±0.67} \\
          & GCBA\_M & 61.44\scalebox{0.5}{±13.41} & 8.78\scalebox{0.5}{±1.92} & 64.42\scalebox{0.5}{±15.49} & 10.74\scalebox{0.5}{±2.58} & 94.82\scalebox{0.5}{±4.64} & 31.61\scalebox{0.5}{±1.55} & 57.40\scalebox{0.5}{±9.58} & 7.18\scalebox{0.5}{±1.20} & 68.00\scalebox{0.5}{±12.83} & 6.80\scalebox{0.5}{±1.28} \\
          & CrossBA &\underline{ 95.24\scalebox{0.5}{±10.64}} & \underline{13.61\scalebox{0.5}{±1.52}} & \textbf{100.00\scalebox{0.5}{±0.00}} & \underline{16.67\scalebox{0.5}{±0.00}} & \textbf{100.00\scalebox{0.5}{±0.00}} & \underline{33.33\scalebox{0.5}{±0.00}} & \underline{96.80\scalebox{0.5}{±4.55}} & \underline{12.10\scalebox{0.5}{±0.57}} & \underline{92.00\scalebox{0.5}{±11.75}} & \underline{9.20\scalebox{0.5}{±1.17}} \\
          & \textbf{\Model} & \textbf{100.00\scalebox{0.5}{±0.00}} & \textbf{100.00\scalebox{0.5}{±0.00}} & \textbf{100.00\scalebox{0.5}{±0.00}} & \textbf{100.00\scalebox{0.5}{±0.00}} & \textbf{100.00\scalebox{0.5}{±0.00}} & \textbf{100.00\scalebox{0.5}{±0.00}} & \textbf{100.00\scalebox{0.5}{±0.00}} & \textbf{99.80\scalebox{0.5}{±0.23}} & \textbf{100.00\scalebox{0.5}{±0.00}} & \textbf{100.00\scalebox{0.5}{±0.00}} \\
      \midrule
      \multirow{4}[2]{*}{MDGPT} & GCBA\_R & -     & -     & -     & -     & -     & -     & -     & -     & -     & - \\
          & GCBA\_M & -     & -     & -     & -     & -     & -     & -     & -     & -     & - \\
          & CrossBA & \underline{95.14\scalebox{0.5}{±5.26}} & \underline{13.59\scalebox{0.5}{±0.75}} & \underline{92.36\scalebox{0.5}{±15.87}} & \underline{15.39\scalebox{0.5}{±2.65}} & \textbf{100.00\scalebox{0.5}{±0.00}} & \underline{33.33\scalebox{0.5}{±0.00}} & \underline{82.60\scalebox{0.5}{±19.86}} & \underline{10.32\scalebox{0.5}{±2.48}} & \underline{82.00\scalebox{0.5}{±19.21}} & \underline{8.20\scalebox{0.5}{±1.92}} \\
          & \textbf{\Model} & \textbf{100.00\scalebox{0.5}{±0.00}} & \textbf{96.61\scalebox{0.5}{±6.17}} & \textbf{100.00\scalebox{0.5}{±0.00}} & \textbf{99.43\scalebox{0.5}{±0.95}} & \textbf{100.00\scalebox{0.5}{±0.00}} & \textbf{100.00\scalebox{0.5}{±0.00}} & \textbf{98.40\scalebox{0.5}{±2.30}} & \textbf{97.68\scalebox{0.5}{±2.57}} & \textbf{99.20\scalebox{0.5}{±1.30}} & \textbf{93.19\scalebox{0.5}{±3.66}} \\
      \bottomrule
    \end{tabular}%
  }
  \vspace{-0.5em}
  \caption{The results of attack effectiveness. \textit{Scen.1} refers to \textit{target-uncontrolled attack}, and \textit{Scen.2} refers to \textit{target-controlled attack}. ``-'' indicates that the method is not applicable to perform the backdoor attack (GCBA~\cite{zhang2023graph} is tailored for the graph contrastive learning method). The best results are shown in \textbf{bold} and the runner-ups are \underline{underlined}.}
  \label{tab:effectiveness}
\end{table*}

\begin{table*}[t]
  \centering
  \setlength{\tabcolsep}{0.8\tabcolsep}
  \resizebox{\textwidth}{!}{%
    \begin{tabular}{c|c|c c|c c|c c|c c|c c}
      \toprule
      \multicolumn{2}{c|}{\textbf{Dataset}} & \multicolumn{2}{c|}{\textbf{Cora}} & \multicolumn{2}{c|}{\textbf{CiteSeer}} & \multicolumn{2}{c|}{\textbf{PubMed}} & \multicolumn{2}{c|}{\textbf{Photo}} & \multicolumn{2}{c}{\textbf{Computers}} \\
      \midrule
      \textbf{Victim} & \textbf{Threaten} & \makecell{\textbf{ACC}\\(\textit{Clean})} & \makecell{\textbf{ASR}\\(\textit{Purified})} & \makecell{\textbf{ACC}\\(\textit{Clean})} & \makecell{\textbf{ASR}\\(\textit{Purified})} & \makecell{\textbf{ACC}\\(\textit{Clean})} & \makecell{\textbf{ASR}\\(\textit{Purified})} & \makecell{\textbf{ACC}\\(\textit{Clean})} & \makecell{\textbf{ASR}\\(\textit{Purified})} & \makecell{\textbf{ACC}\\(\textit{Clean})} & \makecell{\textbf{ASR}\\(\textit{Purified})} \\
      \midrule
      \multirow{4}[2]{*}{GCOPE} & GCBA\_R & 59.18\scalebox{0.5}{±3.93} & 20.87\scalebox{0.5}{±9.57} & 56.30\scalebox{0.5}{±2.86} & 27.28\scalebox{0.5}{±12.83} & 55.00\scalebox{0.5}{±4.39} & 42.18\scalebox{0.5}{±13.59} & 57.00\scalebox{0.5}{±3.24} & 26.60\scalebox{0.5}{±9.56} & 44.60\scalebox{0.5}{±3.71} & 35.00\scalebox{0.5}{±20.04} \\
          & GCBA\_M & 59.78\scalebox{0.5}{±7.20} & 23.27\scalebox{0.5}{±8.89} & 56.12\scalebox{0.5}{±2.13} & 22.62\scalebox{0.5}{±6.51} & 51.10\scalebox{0.5}{±5.79} & 48.71\scalebox{0.5}{±22.34} & 58.80\scalebox{0.5}{±7.33} & 26.80\scalebox{0.5}{±8.90} & 47.20\scalebox{0.5}{±7.26} & 43.20\scalebox{0.5}{±13.55} \\
          & CrossBA & \underline{60.52\scalebox{0.5}{±1.68}} & \underline{52.04\scalebox{0.5}{±2.42}} & \underline{59.06\scalebox{0.5}{±4.45}} & \underline{57.58\scalebox{0.5}{±2.80}} & \underline{51.36\scalebox{0.5}{±5.37}} & \underline{90.65\scalebox{0.5}{±2.52}} & \underline{65.60\scalebox{0.5}{±3.51}} & \underline{67.20\scalebox{0.5}{±9.73}} & \underline{50.40\scalebox{0.5}{±4.83}} & \underline{48.80\scalebox{0.5}{±15.85}} \\
          & \textbf{\Model} & \textbf{61.46\scalebox{0.5}{±3.25}} & \textbf{100.00\scalebox{0.5}{±0.00}} & \textbf{60.10\scalebox{0.5}{±5.72}} & \textbf{100.00\scalebox{0.5}{±0.00}} & \textbf{54.44\scalebox{0.5}{±4.43}} & \textbf{100.00\scalebox{0.5}{±0.00}} & \textbf{65.80\scalebox{0.5}{±4.97}} & \textbf{100.00\scalebox{0.5}{±0.00}} & \textbf{54.80\scalebox{0.5}{±7.56}} & \textbf{100.00\scalebox{0.5}{±0.00}} \\
      \midrule
      \multirow{4}[2]{*}{SAMGPT} & GCBA\_R & 60.36\scalebox{0.5}{±1.89} & 38.24\scalebox{0.5}{±6.24} & 44.10\scalebox{0.5}{±4.47} & 43.92\scalebox{0.5}{±10.22} & 58.92\scalebox{0.5}{±4.35} & 77.80\scalebox{0.5}{±17.21} & 79.20\scalebox{0.5}{±4.32} & 17.80\scalebox{0.5}{±3.11} & 69.00\scalebox{0.5}{±6.00} & 26.20\scalebox{0.5}{±4.15} \\
          & GCBA\_M & 61.94\scalebox{0.5}{±3.56} & 29.32\scalebox{0.5}{±5.45} & 48.76\scalebox{0.5}{±4.09} & 48.38\scalebox{0.5}{±12.30} & 63.02\scalebox{0.5}{±2.60} & 86.26\scalebox{0.5}{±9.05} & 76.20\scalebox{0.5}{±5.67} & 18.20\scalebox{0.5}{±3.11} & \textbf{71.20\scalebox{0.5}{±2.59}} & 23.20\scalebox{0.5}{±4.82} \\
          & CrossBA & \underline{62.82\scalebox{0.5}{±2.85}} & \underline{70.28\scalebox{0.5}{±11.22}} & \underline{59.04\scalebox{0.5}{±2.86}} & \underline{75.90\scalebox{0.5}{±1.95}} & \underline{65.64\scalebox{0.5}{±4.40}} & \underline{74.08\scalebox{0.5}{±10.75}} & \underline{80.40\scalebox{0.5}{±4.34}} & \underline{67.80\scalebox{0.5}{±2.39}} & 67.20\scalebox{0.5}{±7.12} & \underline{50.60\scalebox{0.5}{±6.95}} \\
          & \textbf{\Model} & \textbf{63.54\scalebox{0.5}{±3.73}} & \textbf{88.28\scalebox{0.5}{±3.08}} & \textbf{61.72\scalebox{0.5}{±3.61}} & \textbf{86.04\scalebox{0.5}{±1.56}} & \textbf{65.92\scalebox{0.5}{±5.65}} & \textbf{93.62\scalebox{0.5}{±2.20}} & \textbf{80.60\scalebox{0.5}{±3.58}} & \textbf{86.00\scalebox{0.5}{±2.12}} & \underline{69.20\scalebox{0.5}{±4.09}} & \textbf{84.60\scalebox{0.5}{±2.19}} \\
      \midrule
      \multirow{4}[2]{*}{MDGPT} & GCBA\_R & -     & -     & -     & -     & -     & -     & -     & -     & -     & - \\
          & GCBA\_M & -     & -     & -     & -     & -     & -     & -     & -     & -     & - \\
          & CrossBA & \underline{42.36\scalebox{0.5}{±7.08}} & \underline{47.76\scalebox{0.5}{±13.84}} & \underline{37.82\scalebox{0.5}{±5.00}} & \underline{61.12\scalebox{0.5}{±20.64}} & \underline{50.20\scalebox{0.5}{±6.87}} & \underline{69.10\scalebox{0.5}{±20.81}} & \underline{68.20\scalebox{0.5}{±9.09}} & \underline{35.40\scalebox{0.5}{±5.73}} & \underline{50.20\scalebox{0.5}{±6.10}} & \underline{37.00\scalebox{0.5}{±8.69}} \\
          & \textbf{\Model} & \textbf{60.88\scalebox{0.5}{±4.83}} & \textbf{81.30\scalebox{0.5}{±3.60}} & \textbf{60.58\scalebox{0.5}{±2.82}} & \textbf{85.78\scalebox{0.5}{±2.96}} & \textbf{62.48\scalebox{0.5}{±4.71}} & \textbf{92.36\scalebox{0.5}{±1.97}} & \textbf{79.20\scalebox{0.5}{±6.14}} & \textbf{89.60\scalebox{0.5}{±5.68}} & \textbf{71.80\scalebox{0.5}{±3.49}} & \textbf{85.20\scalebox{0.5}{±4.82}} \\
      \bottomrule
    \end{tabular}%
  }
  \vspace{-0.5em}
  \caption{Results of attack stealthiness. ACC reports the accuracy of the backdoored model on clean, non-triggered input graphs. \textit{Purified} refers to applying edge-based purification to the triggered graphs under Scenario 1. }
  
  \label{tab:stealthiness}
\end{table*}

\subsection{Experiments Settings}
\textbf{Datasets}. We evaluate \Model\ on five widely used node-level classification datasets for GFMs. Cora~\cite{yang2016revisiting}, CiteSeer~\cite{yang2016revisiting}, and PubMed~\cite{yang2016revisiting} are citation networks where nodes correspond to scientific publications and edges represent citation links. Computers~\cite{shchur2018pitfalls} and Photos~\cite{shchur2018pitfalls} are the Amazon co-purchase dataset, where edges indicate frequent co-purchase relationships between products. 
For dataset splitting, we use the standard PyG~\cite{fey2019fast} splits for Cora, CiteSeer, and PubMed. Following~\cite{yu2024text}, we designate the last 100 nodes as the test set for Computers and Photo, and randomly sample $m$ labeled nodes from the remaining nodes to form the $m$-shot training set.

\textbf{Victim GFMs}. We evaluate the backdoor performance of \Model\ against three state-of-the-art GFMs designed for multi-domain graph pre-training and adaptation: (1) GCOPE~\cite{zhao2024all} employs virtual nodes to interconnect graphs across domains, aligning the semantics of graphs. (2) MDGPT\cite{yu2024text} introduces dual prompts to adapt to target domains while integrating unified multi-domain knowledge with a tailored mixture of domain-specific prompts. (3) SAMGPT~\cite{yu2025samgpt} incorporates structural tokens to unify multi-domain structural knowledge and adapt it effectively to unseen domains.

\textbf{Baselines}. We adopt CrossBA~\cite{lyu2024cross}, the state-of-the-art graph backdoor method against cross-domain graph pre-training, as our primary baseline. Following~\cite{lyu2024cross}, we also include two adapted variants of GCBA~\cite{zhang2023graph}: GCBA\_R, which randomly selects a cluster center as the target embedding, and GCBA\_M, which selects the most isolated cluster center.

\textbf{Implement Details.}
Following prior works~\cite{zhao2024all,yu2024text,yu2025samgpt}, GFMs are trained across all datasets except the one held out for testing.
All experiments are conducted in a 5-shot node classification setting. A grid search over $\alpha$ and $\beta$ is performed in the range $[1e^{-2}, 1e^0]$ using a logarithmic step size of 5. Each experiment is repeated 5 times on a single NVIDIA V100 GPU. 
Full hyperparameter configurations and more details are in Appendix~\ref{appen:exp}.

\textbf{Evaluation Metrics.}
Following~\cite{lyu2024cross}, we adopt two evaluation metrics: (1) Attack Success Rate (ASR) measures the proportion of triggered samples misclassified into the target class. (2) Accuracy (ACC) measures performance on clean test data.%

\subsection{Attack Effectiveness Evaluation} 
\label{sec:exp_effect}
To evaluate the effectiveness of \Model, we consider two attack scenarios based on whether the attacker can specify the target label.
\textbf{Scenario 1: Target-Uncontrolled Attack}: The model is expected to misclassify triggered samples consistently into an arbitrary but fixed target class. 
\textbf{Scenario 2: Target-Controlled Attack}. The model is expected to misclassify triggered samples into the attacker-specified target class, aligned with the attacker's intent.

As shown in Table~\ref{tab:effectiveness}, \Model\ consistently achieves the highest ASR across all datasets in both attack scenarios, demonstrating its strong attack capability. 
In Scenario 2, the improvements are particularly notable, with gains ranging from 66.67\% to 90.80\% over the baselines across five datasets.
This is because baseline methods associate the trigger with an unknown fixed target label and cannot align it with the attacker-specified label. Consequently, they degenerate into targeted adversarial evasion attacks~\cite{sun2022adversarial,kwon2025dual} and fail to satisfy the requirements of a targeted backdoor attack in Scenario 2.

\subsection{Attack Stealthiness Evaluation}
To ensure the practical utility of backdoor attacks, stealthiness is a critical requirement that involves two key aspects: (1) the backdoored model should maintain normal behavior on clean input graphs; and (2) the inserted trigger should be inconspicuous and resistant to detection. 
To assess the first aspect, we measure the ACC of the backdoored GFMs on the clean test set.
To assess the second aspect, we apply a simple purification strategy that removes edges between node pairs with feature cosine similarity below 0.1 and report the ASR of the purified triggered graphs under Scenario 1.

As shown in Table~\ref{tab:stealthiness}, \Model\ maintains high clean accuracy for its ability to leverage latent backdoor logic in the pre-trained GNN without the need to modify model parameters. 
Moreover, \Model\ consistently achieves the highest ASR across all target GFMs and datasets after graph purification, outperforming baselines by an average of 36.81\%, 19.98\%, and 36.73\% on GCOPE, MDGPT, and SAMGPT, respectively, demonstrating its strong stealthiness. 

\begin{table}[t]  %
  \centering

  \setlength{\tabcolsep}{0.5\tabcolsep}
  \resizebox{1.0\linewidth}{!}{%
    \begin{tabular}{c|c|cc|cc|cc}
    \toprule
    \multicolumn{2}{c|}{\textbf{Dataset}} & \multicolumn{2}{c|}{\textbf{Cora}} & \multicolumn{2}{c|}{\textbf{Photo}} & \multicolumn{2}{c}{\textbf{Computers}} \\
    \midrule
    \multicolumn{1}{c|}{\textbf{Victim}} & \multicolumn{1}{c|}{\textbf{Threaten}} & \textbf{ASR}   & \textit{Drop}  & \textbf{ASR}   & \textit{Drop} & \textbf{ASR}   & \textit{Drop} \\
    \midrule
    \multirow{4}[2]{*}{SAMGPT} & GCBA\_R & 52.22\scalebox{0.5}{±15.68} & \textit{$\downarrow$9.44} & 38.80\scalebox{0.5}{±14.99} & \textit{$\downarrow$13.20} & 28.80\scalebox{0.5}{±3.96} & \textit{$\downarrow$6.60} \\
          & GCBA\_M & 54.58\scalebox{0.5}{±9.34} & \textit{$\downarrow$6.86} & 44.40\scalebox{0.5}{±21.11} & \textit{$\downarrow$13.00} & 40.60\scalebox{0.5}{±10.45} & \textit{$\downarrow$5.60} \\
          & CrossBA & \underline{90.50\scalebox{0.5}{±13.68}} & \underline{\textit{$\downarrow$4.74}} & \underline{87.40\scalebox{0.5}{±12.64}} & \underline{\textit{$\downarrow$9.40}} & \underline{91.40\scalebox{0.5}{±10.50}} & \textbf{\textit{$\downarrow$0.60}} \\
          & \textbf{\Model} & \textbf{98.66\scalebox{0.5}{±1.85}} & \textbf{\textit{$\downarrow$1.34}} & \textbf{96.00\scalebox{0.5}{±2.55}} & \textbf{\textit{$\downarrow$4.00}}  & \textbf{98.60\scalebox{0.5}{±1.95}} & \underline{\textit{$\downarrow$1.40}} \\
    \midrule
    \multirow{4}[2]{*}{MDGPT} & GCBA\_R & -     & -     & -     & -     & -     & - \\
          & GCBA\_M & -     & -     & -     & -     & -     & - \\
          & CrossBA & \underline{93.78\scalebox{0.5}{±8.05}} & \underline{\textit{$\downarrow$1.36}} & \underline{78.00\scalebox{0.5}{±13.38}} & \underline{\textit{$\downarrow$4.60}} & \underline{79.60\scalebox{0.5}{±16.95}} & \underline{\textit{$\downarrow$2.40}} \\
          & \textbf{\Model} & \textbf{99.32\scalebox{0.5}{±1.31}} & \textbf{\textit{$\downarrow$0.68}} & \textbf{97.80\scalebox{0.5}{±1.64}} & \textbf{\textit{$\downarrow$0.60}} & \textbf{98.40\scalebox{0.5}{±2.19}} & \textbf{\textit{$\downarrow$0.80}} \\
    \bottomrule
    \end{tabular}%
    
  }
  \caption{Results of attack persistence under Scenario 1. \textit{Drop} indicates the decrease in ASR compared to the setting without fine-tuning the pre-trained encoder.}
  \vspace{-0.5em}
  \label{tab:persistenceness}
\end{table}

\subsection{Attack Persistence Evaluation}
The effectiveness of a backdoor attack typically depends on the parameters of the pre-trained model. However, in practical scenarios, downstream users may fine-tune the pre-trained GNN encoder, which can break the association between the trigger and the target embedding.
To evaluate the persistence of \Model, we conduct experiments on the Cora, Photo, and Computers datasets, targeting SAMGPT and MDGPT as victim models. The pre-trained GNN encoder is fine-tuned using a learning rate of 0.001, and we report the ASR under Scenario 1 described in Section~\ref{sec:exp_effect}, keeping all other experimental settings unchanged. 
As shown in Table~\ref{tab:stealthiness}, baseline methods exhibit a performance drop after fine-tuning the pre-trained encoder, whereas \Model\ maintains a high ASR. This suggests that the backdoor logic of our \Model\ is deeply embedded and resistant to forgetting during downstream adaptation.

\subsection{Ablation Study}

To evaluate the contribution of our designed label-free trigger association module, node-adaptive trigger generator, and persistent trigger anchoring module in \Model, we conduct ablation studies using three variants:
(1) \Model(w/o E): We randomly set a target embedding under the target-controlled scenario to assess the impact of removing the label-free trigger association module.
(2) \Model(w/o S): A static trigger is used for all nodes under the graph purification defense in the target-uncontrolled scenario to evaluate the role of the node-adaptive trigger generator.
(3) \Model(w/o P): The persistence loss is removed, and the pre-trained encoder is fine-tuned to examine persistence in the absence of the persistent trigger anchoring module. 
Results on the Photo and Computers datasets with SAMGPT serving as the victim GFM are shown in Figure~\ref{fig:ablation}, while other datasets and victim models show similar patterns. As observed, the three components contribute to the effectiveness, stealthiness, and persistence of attacks, respectively.

\begin{figure}[t]
  \centering
  \includegraphics[width=1.0\linewidth]{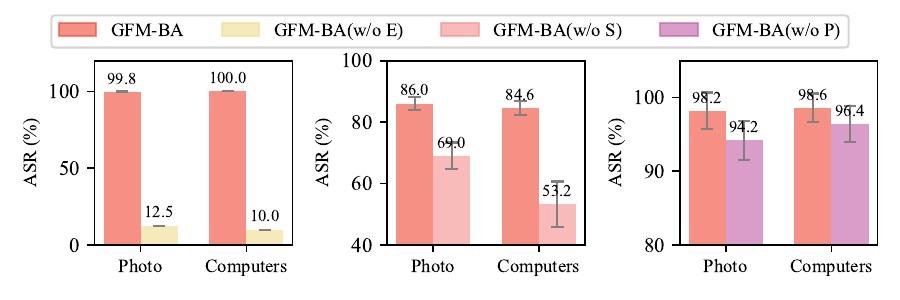}
  \vspace{-2em}
  \caption{Results of ablation studies on Photo and Computers. \Model(w/o E) is evaluated in the target-controlled scenario to assess effectiveness without the label-free trigger association module. \Model(w/o S) is tested in the target-uncontrolled scenario with graph purification to assess stealthiness. \Model(w/o P) is tested in the target-uncontrolled scenario with a fine-tuned backbone to assess the backdoor persistence.}
  \label{fig:ablation}
\end{figure}

\subsection{Hyperparameter Study}
\begin{figure}[t]
  \centering
  \vspace{-0.5em}
  \includegraphics[width=1.0\linewidth]{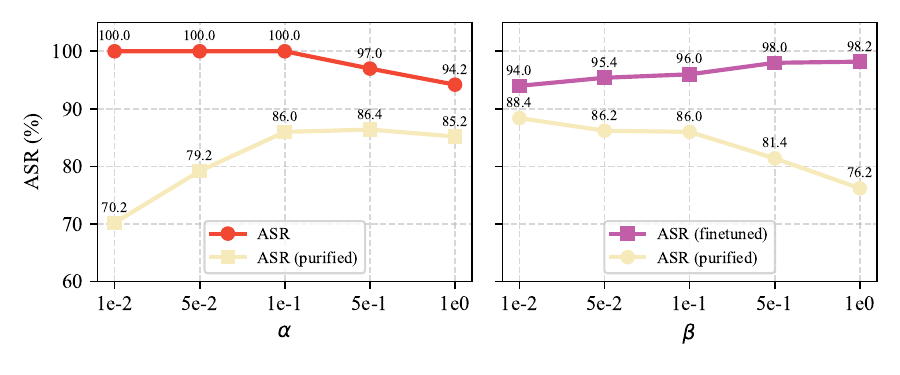}
  \vspace{-2.5em}
  \caption{Hyperparameter study on the Photo dataset.}
  \label{fig:hyper}
\end{figure}

In this section, we analyze the sensitivity of \Model\ to hyperparameters $\alpha$ and $\beta$. For $\alpha$, we examine its impact on the ASR of both the original triggered graphs and the purified triggered graphs, reflecting the trade-off between attack effectiveness and stealthiness. For $\beta$, we evaluate the ASR after fine-tuning the pre-trained encoder as well as under purification, assessing its influence on the persistence of the backdoor. Results on the Photo dataset under the target-uncontrolled scenario with SAMGPT as the victim are shown in Figure~\ref{fig:hyper}.

\section{Conclusion}
\label{sec:con}
We propose \Model, a novel backdoor attack model against Graph Foundation Models during the pre-training stage, designed to ensure effectiveness, stealthiness, and persistence. To perform effective backdoor when downstream tasks are unknown during pre-training, we introduce a label-free trigger association module that associates the trigger with a set of prototype embeddings selected via FPS. To enhance stealthiness, we design a node-adaptive trigger generator that produces trigger features close to the target node. For persistence, we anchor the trigger to parameters that are less sensitive to fine-tuning. Extensive experiments against representative victim GFMs demonstrate that \Model\ consistently outperforms existing methods across all objectives. 

\clearpage
\section*{Acknowledgements}
The corresponding author is Jianxin Li.
This work was supported by the National Natural Science Foundation of China under Grants No. 62225202 and No. 62302023, and by the State Key Laboratory of Complex \& Critical Software Environment (CCSE-2024ZX).
We express our sincere gratitude to all reviewers for their valuable efforts and contributions.
\bibliography{reference}
\clearpage
\clearpage
\appendix

\setcounter{secnumdepth}{2}

\setcounter{section}{0}
\renewcommand{\thesection}{\Alph{section}}

\setcounter{theorem}{0}
\renewcommand{\thetheorem}{\Alph{section}\arabic{theorem}}

\setcounter{assumption}{0}
\renewcommand{\theassumption}{\Alph{section}\arabic{assumption}}

\setcounter{proposition}{0}
\renewcommand{\theproposition}{\Alph{section}\arabic{proposition}}

\setcounter{definition}{0}
\renewcommand{\thedefinition}{\Alph{section}\arabic{definition}}

\section{Proof}
\label{appen:proof}

\subsection{Proof of Proposition~\ref{theorem:fps}}

We first restate the theorem as follows:
\begin{proposition}
When point density decays monotonically from each class centroid, increasing the separation between centroids raises the probability that FPS will cover more classes in a fixed number of steps.
\end{proposition}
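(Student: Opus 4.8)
The plan is to make the statement precise as a mixture model and then reduce the coverage question to a purely geometric ``domination'' condition whose probability manifestly increases with the centroid separation. I would model the pre-training embeddings as $n$ i.i.d.\ draws from $\sum_{c=1}^{C}\pi_c p_c$, where the class-$c$ component has the radial form $p_c(x)=f(\lVert x-\mu_c\rVert)$ for a common profile $f$ that is non-increasing in its argument --- this is exactly the hypothesis that density decays monotonically away from each class centroid. Write $\Delta=\min_{c\neq c'}\lVert\mu_c-\mu_{c'}\rVert$ for the separation, hold the cluster shape fixed, and rescale only the centroid configuration, so that $\Delta$ is the single quantity being varied; in particular the deviations $\xi_i=x_i-\mu_{c(i)}$ have a law that does not depend on $\Delta$, and I set $\rho=\max_{i\le n}\lVert\xi_i\rVert$ to be the (shape-only, $\Delta$-independent) cluster-radius statistic.

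First I would prove a one-step deterministic lemma: if $\rho<\Delta/4$, then at any FPS step whose current selected set $S_t$ has not yet met some class $c'$, every point $x$ lying in an already-covered class has FPS score $\min_{y\in S_t}\lVert x-y\rVert\le 2\rho$ (it is within $2\rho$ of the previously chosen point of its own class), whereas every point in the uncovered class $c'$ has score $\ge\Delta-2\rho>2\rho$ (all selected points sit in clusters whose centroids are $\ge\Delta$ from $\mu_{c'}$); hence the FPS $\operatorname{argmax}$ is forced into a not-yet-covered class. A short induction over the first $k$ steps then shows that, on the event $\{\rho<\Delta/4\}$, FPS visits $\min(k,C)$ distinct classes within $k$ steps, for any initialization rule and after intersecting with the (harmless) event that all $C$ classes occur in the sample.

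Combining the two, for every target count $m\le\min(k,C)$ we obtain
\[
\Pr\bigl(\text{FPS covers}\ge m\text{ classes in }k\text{ steps}\bigr)\;\ge\;\Pr\!\left(\rho<\tfrac{\Delta}{4}\right)\;=\;F_\rho\!\left(\tfrac{\Delta}{4}\right),
\]
and since the law of $\rho$ does not involve $\Delta$, the map $\Delta\mapsto F_\rho(\Delta/4)$ is non-decreasing and tends to $1$ as $\Delta\to\infty$. Equivalently, by coupling: freeze all $\xi_i$ and any FPS randomness; there is then a realization-dependent threshold $\Delta^\star=4\rho$ past which coverage is complete, so the coverage probability dominates $\Pr(\Delta^\star<\Delta)$, a non-decreasing function of $\Delta$. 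This is exactly where monotone decay is used: it forces each component to be centered at its centroid with a $\Delta$-independent spread, so that $\Delta$ is the dominant geometric scale and $F_\rho$ is non-trivial, making the bound informative.

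The step I expect to be the main obstacle is turning ``raises the probability'' into a fully rigorous monotonicity statement: because FPS is a globally coupled, order-dependent greedy process, one should not expect the coverage probability to be monotone in $\Delta$ in every regime, so the defensible claim is the high-separation version above --- a non-decreasing lower bound on the coverage probability that increases to $1$ --- which matches the paper's framing of Proposition~\ref{theorem:fps} as a theoretical motivation rather than a tight characterization. A secondary nuisance is bookkeeping for the edge cases (a class missing from the sample, ties in the FPS $\operatorname{argmax}$, and $k\ge C$), all handled by conditioning on full class occurrence, breaking ties arbitrarily, and the $\min(k,C)$ truncation.
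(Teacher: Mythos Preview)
Your proposal is correct and follows essentially the same architecture as the paper's proof: both freeze the per-class deviations and vary only the centroid configuration, both use the triangle-inequality bounds (points in a covered class are within $2\rho$ of the selected set, points in an uncovered class are at least $\Delta-2\rho$ away), and both conclude by conditioning on a ``radius small relative to separation'' event. The paper formalizes separation by a parameter $\lambda$, introduces an $\alpha$-quantile radius $\rho_{\max}(\alpha)$ and the high-probability event $\mathcal{E}_\alpha$, compares two levels $\lambda_1\le\lambda_2$ pathwise on $\mathcal{E}_\alpha$, and then sends $\alpha\uparrow 1$ to claim genuine monotonicity of $\Pr(C_k\ge r)$. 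You instead take $\rho=\max_i\lVert\xi_i\rVert$, prove the deterministic threshold lemma ``$\rho<\Delta/4\Rightarrow$ full coverage in $\min(k,C)$ steps,'' and derive the monotone lower bound $F_\rho(\Delta/4)$ while explicitly flagging that exact monotonicity of the coverage probability is not what you establish. The paper's route buys a nominally stronger conclusion, but its $\alpha\to1$ limit tacitly needs bounded-support noise (otherwise $\rho_{\max}(\alpha)\to\infty$) and the passage from stepwise inequalities $\Pr(\mathcal A_t(\lambda_2))\ge\Pr(\mathcal A_t(\lambda_1))$ to tail dominance of $C_k$ is not immediate; your route is weaker in statement but cleaner and fully rigorous, and your caveat about FPS being a globally coupled greedy process is exactly the right diagnosis of why the sharper claim is delicate.
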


\begin{proof}
Given $m$ classes. 
For class $i\in[m]$ we observe $n_i$ i.i.d. points $X_{i,\ell}=c_i+Z_{i,\ell}\in\mathbb R^d, \ell=1,\dots,n_i,$ where $c_i\in\mathbb R^d$ is the class centroid and $Z_{i,\ell}$ are i.i.d., centered at the origin.
In this proposition, we restrict our attention to scenarios where the sampling density decays monotonically from each class centroid. We begin by formally stating the underlying assumptions:

\begin{assumption}
For each class $i$, the class density is pointwise nonincreasing in radius about $c_i$: for all $r_1<r_2$ and any unit vectors at all directions $u,v\in\mathbb S^{d-1}$, $f_i(c_i+r_1u)\ \ge\ f_i(c_i+r_2v).$ Equivalently, the radial variable $R_i=\|Z_{i,1}\|$ has a nonincreasing density on $[0,\infty)$
\end{assumption}

\begin{definition}
\textbf{Farthest-Point Sampling (FPS).} Given a finite dataset $\mathcal X\subset\mathbb R^d$, FPS picks an arbitrary seed $x_1\in\mathcal X$; at step $t\ge2$ it selects $x_t\in\arg\max_{x\in\mathcal X}\operatorname{dist}\!\bigl(x,\{x_1,\dots,x_{t-1}\}\bigr), \operatorname{dist}(x,S):=\min_{p\in S}\|x-p\|.$ Let $C_k$ be the number of distinct class labels among $\{x_1,\dots,x_k\}$.
\end{definition}

Let $C_k$ be the number of distinct class labels among $\{x_1,\dots,x_k\}$. We formalize the phrase ``increase the separation between centroids'' with the following structural, geometry-agnostic condition. 
We compare datasets indexed by a nonnegative parameter $\lambda\ \ge\ 1$ that controls how far apart class centroids are. For each $\lambda$, let $\mathcal X(\lambda)$ be the dataset formed with the same noise draws $\{Z_{i,\ell}\}$ but with centroids changed to $c_i(\lambda)$. We only require:

\begin{assumption}
Consider a family of datasets $\{\mathcal X(\lambda)\}_{\lambda\ge1}$ built from the same per-class relative positions $\{Z_{i,\ell}\}$ but with centroids moved farther apart (the concrete motion is arbitrary). We require: \\(i) For any two samples from the \emph{same} class, their pairwise distance is independent of $\lambda$. \\ (ii) For any two samples from \emph{different} classes, their pairwise distance is nondecreasing in $\lambda$.
\end{assumption}

In words, increasing separation does not change intra-class geometry and never makes cross-class pairs closer.
Write $\Delta(\lambda):=\min_{i\neq j}\|c_i(\lambda)-c_j(\lambda)\|$ for the minimum inter-centroid distance at separation level $\lambda$.
For $\alpha\in(0,1)$, define the class-wise $\alpha$-quantile radius as follows:
\begin{align}
    &\rho_i(\alpha):=\inf\{r\ge0:\ \mathbb P(R_i\le r)\ge\alpha\},\\
&\rho_{\max}(\alpha):=\max_{i\le m}\rho_i(\alpha).
\end{align}
Define event $\mathcal E_\alpha:=\bigcap_{i=1}^m\bigcap_{\ell=1}^{n_i}\bigl\{\ \|X_{i,\ell}-c_i\|\le\rho_i(\alpha)\ \bigr\}.$
Because samples within a class are i.i.d., $\mathbb P(\mathcal E_\alpha)\ge \alpha^{\,N}$.

Denote $S_{t-1}$ the set of points that FPS has already selected in the first $t-1$ iterations.We begin by introducing two simple facts that will facilitate the analysis later:\\
(i) If a class is covered, then for any of its samples $x$ there exists $p\in S_{t-1}$ from the same class, and $\|x-p\|\le \|x-c_i\|+\|p-c_i\|\le 2\,\rho_{\max}(\alpha).$ Hence, for any covered-class sample $x$, $\operatorname{dist}(x,S_{t-1})\ \le\ 2\,\rho_{\max}(\alpha).$\\
(ii) If class $i$ is uncovered, then every $p\in S_{t-1}$ is from some class $j\neq i$. Therefore, $\|x-p\|\ \ge\ \|c_i(\lambda)-c_j(\lambda)\|-\|x-c_i(\lambda)\|-\|p-c_j(\lambda)\| \ge\ \Delta(\lambda)-2\,\rho_{\max}(\alpha),$ so for any uncovered-class sample $x$,$\operatorname{dist}(x,S_{t-1})\ \ge \Delta(\lambda)-2\,\rho_{\max}(\alpha).$ \\
Note that these bounds depend only on $\rho_{\max}(\alpha)$ and $\Delta(\lambda)$, not on the specific points already chosen.

We compare two separation levels $\lambda_1\le \lambda_2$. Run FPS on the same underlying sample at both levels, producing sequences $x_t(\lambda_1)$ and $x_t(\lambda_2)$ with selected sets $S_{t-1}(\lambda_1)$ and $S_{t-1}(\lambda_2)$. Define event $\mathcal A_t(\lambda) $ as follows:
\begin{equation}
\mathcal A_t(\lambda) := 
\left\{
  \parbox{0.65\linewidth}{
    the \(t\)-th FPS selection at level \(\lambda\) comes from a previously \emph{uncovered} class
  }
\right\}
\end{equation}

On the event $\mathcal{E}_\alpha$ (the two aforementioned inequalities, facts (i) and (ii)) imply that: \\
(1) Every uncovered candidate at level $\lambda_2$ is at distance at least $\Delta(\lambda_2)-2\rho_{\max}(\alpha)$ from its selected set. \\
(2) Every covered candidate at any level is at a distance at most $2\rho_{\max}(\alpha)$ from its selected set.

Because $\lambda_2\ge\lambda_1$ and $\Delta(\lambda)$ is nondecreasing in $\lambda$, we have:
\begin{equation}
\Delta(\lambda_2)-2\rho_{\max}(\alpha)\ge\Delta(\lambda_1)-2\rho_{\max}(\alpha).
\end{equation}
Hence, if step $t$ selects an uncovered class at level $\lambda_1$, then the uncovered-vs-covered gap at level $\lambda_2$ is no smaller, and step $t$ also selects an uncovered class at level $\lambda_2$. 
In symbols, we have:
\begin{equation}
\mathbb P\bigl(\mathcal A_t(\lambda_2)\cap \mathcal E_\alpha\bigr)
\;\ge\;
\mathbb P\bigl(\mathcal A_t(\lambda_1)\cap \mathcal E_\alpha\bigr).
\end{equation}
Use $\mathbb P(B)\ge \mathbb P(B\cap G)=\mathbb P(B)-\mathbb P(B\cap G^c)\ge \mathbb P(B)-\mathbb P(G^c)$ with $B=\mathcal A_t(\lambda_i)$, $G=\mathcal E_\alpha$, we have:
\begin{equation}
\begin{aligned}
\mathbb P\bigl(\mathcal A_t(\lambda_2)\bigr)
&\ \ge\
\mathbb P\bigl(\mathcal A_t(\lambda_1)\bigr)
\ -\
\bigl(1-\mathbb P(\mathcal E_\alpha)\bigr) \\
&\ \ge\
\mathbb P\bigl(\mathcal A_t(\lambda_1)\bigr)
\ -\
\bigl(1-\alpha^{\,N}\bigr).
\end{aligned}
\end{equation}
Letting $\alpha\uparrow 1$ (so that $1-\alpha^{\,N}\downarrow 0$) yields $\mathbb P\bigl(\mathcal A_t(\lambda_2)\bigr)\ \ge\ \mathbb P\bigl(\mathcal A_t(\lambda_1)\bigr).$
Finally, the number of classes covered in the first $k$ steps is as follows:
\begin{equation}
    C_k(\lambda)=\sum_{t=1}^k \mathbf 1_{\mathcal A_t(\lambda)}.
\end{equation}
Summing the inequalities over $t=1,\dots,k$ gives, for every $r\in\{1,\dots,\min\{k,m\}\}$, we have:
\begin{equation}
\mathbb P\{C_k(\lambda_2)\ge r\}\ \ge\ \mathbb P\{C_k(\lambda_1)\ge r\}.
\end{equation}
This is exactly the proposition: increasing centroid separation (larger $\lambda$) increases the probability that FPS samples from at least $r$ classes within $k$ steps.

This completes the proof.
\end{proof}

\subsection{Proof of Proposition~\ref{theorem:insensitive}}

We first restate the theorem as follows:

\begin{proposition}
Consider a GNN with parameters $\boldsymbol{\Theta}$. Let $\mathbf{X}$ be the node features, $\mathbf{A}$ be the adjacency matrix, and $\mathbf{Z}$ be the output. Assume the Jacobian matrix $J_{\partial \mathbf{Z}/\partial \theta_k}(\mathbf{X})$ is rank-deficient. For any direction $\Delta \boldsymbol{\Theta} \in \bigcup_{\mathbf{X}} \ker J_{\partial \mathbf{Z}/\partial \theta_k}(\mathbf{X})$, there exists an input $\mathbf{X}$ such that $\mathbf{Z}$ is first-order insensitive to $\Delta \boldsymbol{\Theta}$, where $\ker$ denotes the null space of the Jacobian matrix, and $\bigcup$ denotes the union operator.
\end{proposition}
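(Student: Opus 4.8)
The plan is to derive the statement as an essentially definitional consequence of the first-order Taylor expansion of the GNN output in its parameters, so the bulk of the work is fixing notation carefully and pinning down what ``first-order insensitive'' means. First I would fix $\boldsymbol{\Theta}\in\mathbb R^{P}$ as the flattened parameter vector and, for a fixed adjacency matrix $\mathbf A$, view the GNN as a map $\boldsymbol{\Theta}\mapsto \mathbf Z(\mathbf X,\mathbf A;\boldsymbol{\Theta})$ with vectorized output in $\mathbb R^{q}$. Under the standard assumption that the layer nonlinearities are differentiable (or differentiable at the operating point, as for ReLU networks away from their kinks), this map is $C^{1}$, so the Jacobian $J(\mathbf X):=\partial\,\mathrm{vec}(\mathbf Z)/\partial\boldsymbol{\Theta}\in\mathbb R^{q\times P}$ is well defined; its $k$-th column is $\partial\mathbf Z/\partial\theta_k$, and this is exactly the matrix written $J_{\partial\mathbf Z/\partial\theta_k}(\mathbf X)$ in the statement, depending on the input $\mathbf X$ while $\mathbf A,\boldsymbol{\Theta}$ are held fixed.

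Next I would say that $\mathbf Z$ is \emph{first-order insensitive} to a direction $\Delta\boldsymbol{\Theta}$ at input $\mathbf X$ when the directional derivative $\left.\tfrac{d}{d\eta}\right|_{\eta=0}\mathbf Z(\mathbf X,\mathbf A;\boldsymbol{\Theta}+\eta\,\Delta\boldsymbol{\Theta})$ vanishes, equivalently when the linear term in $\mathbf Z(\mathbf X,\mathbf A;\boldsymbol{\Theta}+\Delta\boldsymbol{\Theta})=\mathbf Z(\mathbf X,\mathbf A;\boldsymbol{\Theta})+J(\mathbf X)\,\Delta\boldsymbol{\Theta}+o(\|\Delta\boldsymbol{\Theta}\|)$ is zero. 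The one identity to invoke is the chain rule, which gives $\left.\tfrac{d}{d\eta}\right|_{\eta=0}\mathbf Z(\mathbf X,\mathbf A;\boldsymbol{\Theta}+\eta\,\Delta\boldsymbol{\Theta})=J(\mathbf X)\,\Delta\boldsymbol{\Theta}$; hence first-order insensitivity at $\mathbf X$ is precisely the algebraic condition $J(\mathbf X)\,\Delta\boldsymbol{\Theta}=\mathbf 0$, i.e. $\Delta\boldsymbol{\Theta}\in\ker J(\mathbf X)$. Given this, the conclusion is immediate: for any $\Delta\boldsymbol{\Theta}\in\bigcup_{\mathbf X}\ker J(\mathbf X)$ the definition of the union produces an input $\mathbf X^{\star}$ with $\Delta\boldsymbol{\Theta}\in\ker J(\mathbf X^{\star})$, so $J(\mathbf X^{\star})\,\Delta\boldsymbol{\Theta}=\mathbf 0$, and therefore $\mathbf Z$ is first-order insensitive to $\Delta\boldsymbol{\Theta}$ at $\mathbf X^{\star}$. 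I would also remark that the rank-deficiency hypothesis is what keeps this non-vacuous --- it forces $\ker J(\mathbf X)\neq\{\mathbf 0\}$ for at least one $\mathbf X$, so the union contains nonzero directions --- and note that rank deficiency is natural for GNNs, whose weights are shared across message-passing steps and whose per-graph output dimension is typically small relative to $P$, inducing linear dependence among the columns $\partial\mathbf Z/\partial\theta_k$.

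Since the argument is essentially a definitional unpacking, there is no real obstacle; the points requiring care are (i) the regularity needed to apply the chain rule --- one should either assume smooth activations or restrict to a differentiability point of the network --- and (ii) remembering that $\bigcup_{\mathbf X}\ker J(\mathbf X)$ is only a set, not a subspace, so a given $\Delta\boldsymbol{\Theta}$ may be a null direction for one input but not another, which is exactly why the conclusion is stated as the existence of a witnessing input rather than insensitivity for all inputs. If one wanted a sharper statement, the same reasoning identifies the full set of good inputs as $\{\mathbf X: J(\mathbf X)\Delta\boldsymbol{\Theta}=\mathbf 0\}$, whose size or density could then be studied under extra architectural assumptions, but for the proposition as stated a single witnessing input suffices.
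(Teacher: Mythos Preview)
Your proposal is correct and follows essentially the same approach as the paper: both interpret $J_{\partial\mathbf Z/\partial\theta_k}(\mathbf X)$ as the output-to-parameter Jacobian, identify first-order insensitivity with $J(\mathbf X)\,\Delta\boldsymbol{\Theta}=\mathbf 0$, and then extract a witnessing input directly from the definition of the union of kernels, with rank-deficiency (via rank--nullity) ensuring the union contains nonzero directions. If anything, your version is more explicit than the paper's in spelling out the Taylor/chain-rule justification for the equivalence between kernel membership and first-order insensitivity.
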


\begin{proof}

We first formally state the following assumption:

\begin{assumption}
We assume the Jacobian matrix $J_{\partial \mathbf{Z} / \partial \theta_k}(\mathbf{X})$ is rank-deficient.
\end{assumption}

This assumption is generally valid in practice, as most neural networks are over-parameterized~\cite{zou2019improved}.

Let us define the sensitivity function
\begin{equation}
G(\mathbf{X}) := \frac{\partial \mathbf{Z}}{\partial \theta_k},
\end{equation}
which captures how the output \( \mathbf{Z} \) changes with respect to a parameter \( \theta_k \in \boldsymbol{\Theta} \). The Jacobian \( J_G(\mathbf{X}) \) describes the sensitivity of this function to parameter perturbations.

Suppose that for some input \( \mathbf{X} \), the Jacobian \( J_G(\mathbf{X}) \) is rank-deficient. Then, by definition of the kernel, there exists a nonzero direction \( \Delta \boldsymbol{\Theta} \in \ker J_G(\mathbf{X}) \) such that
\begin{equation}
J_G(\mathbf{X}) \Delta \boldsymbol{\Theta} = 0.
\end{equation}
This implies that the directional derivative of \( \mathbf{Z} \) along \( \Delta \boldsymbol{\Theta} \) vanishes:
\begin{equation}
\frac{\partial \mathbf{Z}}{\partial \theta_k} \cdot \Delta \boldsymbol{\Theta} = 0.
\end{equation}

Therefore, the output is first-order insensitive to perturbations in \( \theta_k \) along this direction.

Since \( J_G(\mathbf{X}) \) is rank-deficient, the rank-nullity theorem ensures that the kernel is non-trivial, and such directions always exist. Consequently, for every \( \Delta \boldsymbol{\Theta} \in \bigcup_{\mathbf{X}} \ker J_{\partial \mathbf{Z}/\partial \theta_k}(\mathbf{X}) \), there exists at least one input \( \mathbf{X} \) such that:
\begin{equation}
\frac{\partial \mathbf{Z}^{(L)}}{\partial \theta_k} \cdot \Delta \boldsymbol{\Theta} = 0.
\end{equation}

This completes the proof.
\end{proof}

\section{Algorithm}\label{appen:alg}
The overall training pipeline of \Model\ is detailed in Algorithm~\ref{alg:training}.

\renewcommand{\algorithmicrequire}{\textbf{Input:}}  
\renewcommand{\algorithmicensure}{\textbf{Output:}}  

\begin{algorithm}
\caption{The overall training pipeline of \Model.}
\begin{algorithmic}
\label{alg:training}
\REQUIRE Pre-trained GNN $\phi(\cdot)$, pre-training graphs $\{\mathcal{G}_{i}\}_{i=1}^{n}$, hyperparameters $\alpha,\beta$.
\ENSURE Trigger generator.
\STATE Sample $\mathcal{E}_{\text{target}}$ from node embeddings $\mathcal{E}$ obtained by the pre-trained GNN using FPS.
\STATE Construct the mixed graph set using pre-training graphs as described in Eq.~\ref{eq:mixed}.
\STATE Identify sensitive and insensitive parameters as defined in Eq.~\ref{eq:sensitive}.
\FOR{$\text{epoch} = 1,\dots,t$}
    \FOR{each node $n_{i}$ in the pre-training graphs}
        \STATE Construct the induced graph and randomly sample a target embedding $\mathbf{e}_i$ from $\mathcal{E}_{\text{target}}$.
        \STATE Compute $\mathcal{L}_{\text{eff}}$, $\mathcal{L}_{\text{ste}}$, and $\mathcal{L}_{\text{per}}$ using Eq.~\ref{eq:loss_e}, Eq.~\ref{eq:loss_s}, and Eq.~\ref{eq:loss_p}, respectively.
        \STATE Optimize the trigger generator using $\mathcal{L} = \mathcal{L}_{\text{eff}} + \alpha \mathcal{L}_{\text{ste}} + \beta \mathcal{L}_{\text{per}}$.
    \ENDFOR
\ENDFOR
\end{algorithmic}
\end{algorithm}

\section{Experiments Details}
\label{appen:exp}

\subsection{Experimental Details for Figure~\ref{fig:pc}}
We conduct a preliminary experiment to examine the distribution of parameter update magnitudes in a pre-trained GNN after downstream fine-tuning. The results are shown in Figure~\ref{fig:pc}. 
The pre-training datasets are CiteSeer, PubMed, Computers, and Photos, while the downstream test dataset is Cora, as described in Section~\ref{sec:exp}. 
We adopt a two-layer GCN as the GNN encoder. The learning rate for pre-training is set to 0.0001, with a maximum of 10,000 epochs and an early stopping patience of 20. For fine-tuning the pre-trained encoder, the learning rate is set to 0.001, and epochs is 500.
\begin{table}[t]
  \centering
  \caption{Hyperparameter settings across different datasets and victim GFMs.}
  \setlength{\tabcolsep}{0.5\tabcolsep}
  \resizebox{0.5\linewidth}{!}{%
    \begin{tabular}{c|c|ccc}
    \toprule
    \textbf{Victim} & \textbf{Dataset} & $t$ & $\alpha$ & $\beta$ \\
    \midrule
    \multirow{5}[2]{*}{GCOPE} & Cora  & 5     & 0.1   & 0.1 \\
          & CiteSeer & 5     & 0.1   & 0.1 \\
          & PubMed & 5     & 0.1   & 0.1 \\
          & Photo & 5     & 0.1   & 0.1 \\
          & Computers & 5     & 0.1   & 0.1 \\
    \midrule
    \multirow{5}[2]{*}{MDGPT} & Cora  & 1     & 0.1   & 0.1 \\
          & CiteSeer & 1     & 0.1   & 0.1 \\
          & PubMed & 1     & 0.1   & 0.1 \\
          & Photo & 1     & 0.1   & 0.1 \\
          & Computers & 1     & 0.1   & 0.1 \\
    \midrule
    \multirow{5}[2]{*}{SAMGPT} & Cora  & 1     & 0.1   & 0.1 \\
          & CiteSeer & 1     & 0.1   & 0.1 \\
          & PubMed & 1     & 0.1   & 0.1 \\
          & Photo & 1     & 0.1   & 0.1 \\
          & Computers & 1     & 0.1   & 0.1 \\
    \bottomrule
    \end{tabular}%
  }
  \label{tab:hyper-setting}
\end{table}

\subsection{Implement Details}
During the training stage of \Model, we follow the procedure in~\cite{zhao2024all, sun2023all, lyu2024cross} by splitting the pre-training graphs into induced subgraphs, with the smallest and largest subgraph sizes set to 15 and 30, respectively. 
For all the datasets, we set the number of prototype embeddings to $k=500$ (only 1.12\% $\sim$ 1.82\% of the total number of nodes in the pre-training datasets), the standard deviation of the Gausssian distribution used to sample the perturbation noise $\epsilon$ to $\sigma=0.1$, the perturbation ratio to $s=0.2$, the mixing ratio for synthesizing the mixed graph to $\lambda = 0.5$, and the number of perturbation iterations to $m=5$. The hidden dimension of the MLP-based trigger generator is set to 128. The remaining hyperparameters for each dataset across the three victim GFMs are listed in Table~\ref{tab:hyper-setting}. All datasets are available on Pyg~\cite{fey2019fast} and consented by the authors for academic usage.  

\subsection{Experiments compute resources}
All experiments are conducted on a single NVIDIA V100 GPU, with approximately 20 GB of GPU memory usage in our experiments. The training time for \Model\ ranges from approximately 15 $\sim$ 30 minutes across the five datasets against three representative victim GFMs.

\section{Further Discussion.}
\label{appen:limit}
One limitation of this work is its focus on static graphs, leaving the exploration of dynamic, temporal, and heterogeneous graph structures as an important direction for future research. 
This work highlights a previously largely underexplored vulnerability of GFMs during pre-training, aiming to raise awareness and promote research on robust, trustworthy graph learning. By exposing these risks, we hope to facilitate the development of more secure GFMs and support their responsible deployment in real-world settings.

\end{document}